\newcommand{\cmark}{\ding{51}}%
\newcommand{\xmark}{\ding{55}}%
\newtheorem{theorem}{Theorem}[section]
\newtheorem{lemma}[theorem]{Lemma}
\newtheorem{definition}{Definition}[section]
\newtheorem{corollary}{Corollary}[section]
\DeclareMathOperator{\proj}{proj}
\declaretheorem[name= ~,Refname={Assumption, Assumptions}]{assumption} 
\def\BibTeX{{\rm B\kern-.05em{\sc i\kern-.025em b}\kern-.08em
    T\kern-.1667em\lower.7ex\hbox{E}\kern-.125emX}}
\newcommand\copyrighttext{%
  \footnotesize  This paper will be published in IEEE Transactions on Automatic Control.\\ \textcopyright2024 IEEE. Personal use of this material is permitted.
  Permission from IEEE must be obtained for all other uses, in any current or future
  media, including reprinting/republishing this material for advertising or promotional
  purposes, creating new collective works, for resale or redistribution to servers or
  lists, or reuse of any copyrighted component of this work in other works.}
\newcommand\copyrightnotice{%
\begin{tikzpicture}[remember picture,overlay]
\node[anchor=south,yshift=-3pt] at (current page.south) {\fbox{\parbox{\dimexpr\textwidth-\fboxsep-\fboxrule\relax}{\copyrighttext}}};
\end{tikzpicture}%
}
\begin{document}

\title{Optimal Linear Filtering for\\Discrete-Time Systems with\\Infinite-Dimensional Measurements}
\author{Maxwell M. Varley, Timothy L. Molloy, and Girish N. Nair
\thanks{This work received funding from the Australian Government, via grant AUSMURIB000001 associated with ONR MURI grant N00014-19-1-2571. }
\thanks{M. M. Varley and G. N. Nair are with the Department of Electrical and Electronic Engineering, University of Melbourne, Parkville, VIC, 3010, Australia. (emails: varleym@student.unimelb.edu.au, gnair@unimelb.edu.au)}
\thanks{T. L. Molloy is with the CIICADA Lab, School of Engineering, Australian National University, Canberra, ACT 2601, Australia (e-mail: timothy.molloy@anu.edu.au)}
}

\maketitle
\copyrightnotice

\begin{abstract}
Systems equipped with modern sensing modalities such as vision and lidar gain access to increasingly high-dimensional measurements with which to enact estimation and control schemes.
In this article, we examine the continuum limit of high-dimensional
measurements and analyze state estimation in linear time-invariant systems with infinite-dimensional measurements but finite-dimensional states, both corrupted by additive noise. 
We propose a linear filter and derive the corresponding optimal gain functional in the sense of the minimum mean square error, analogous to the classic Kalman filter. By modeling the measurement noise as a wide-sense stationary random field, we are able to derive the optimal linear filter explicitly, in contrast to previous derivations of Kalman filters in distributed-parameter settings. 
Interestingly, we find that we need only impose conditions that are finite-dimensional in nature to ensure that the filter is asymptotically stable.
The proposed filter is verified via simulation of a linearized system with a pinhole camera sensor.

\end{abstract}

\begin{IEEEkeywords}
Discrete-Time Linear Systems, Distributed Parameter Systems, Kalman Filtering, State Estimation, Stochastic Fields, Stochastic Processes 

\end{IEEEkeywords}

\section{Introduction}

\IEEEPARstart{M}{any} systems involve sensors that produce high-dimensional data. These types of systems occur in a wide range of fields, such as modeling economic growth \cite{Belloni2011}, the spiking patterns of neurons in biological systems \cite{Sepulchre}, and vision-based sensing in robotics \cite{Corke}.
In real-time settings with limited computational power, this data is typically converted to a lower-dimensional representation for timely processing, e.g. by extracting a limited number of key features or restricting to a subset of the data.

In this work, we propose to head in the opposite direction, and treat high-dimensional observations as infinite-dimensional random fields on a continuous index set. Our motivation for this approach comes from physics, where it is  often more convenient to treat a complex physical system as a continuum, rather than as a large collection of discrete elements. With the right analytical tools,  a continuous representation can simplify analysis and offer insights into the behaviour of the underlying high-dimensional system.

A system with infinite-dimensional observations does come with its own challenges however, which require the use of different analytical tools and lead to a new perspective compared to its finite-dimensional counterpart. The characterization of infinite-dimensional measurement noise involves stochastic fields, which must be handled carefully to maintain rigour. An example of this is the need for a generalization of the Fubini-Tonelli theorem to wide-sense stationary random fields, which we will present in Theorem \ref{Fub-Ton-Custom}.

Another important distinction is that in a standard finite-dimensional Kalman filter setup, an optimal gain is given by solving a matrix equation, and existence and uniqueness conditions arise naturally from standard linear algebra results. The infinite-dimensional observation space, however, results in an integral equation. We will show in Section \ref{sec:FilterDerivation} (under certain regularity assumptions) that this can be solved via Fourier methods and that the existence and uniqueness conditions can, surprisingly, still be given by examining properties of finite-dimensional objects. This gives us access to a closed-form continuous representation of the optimal gain, which is amenable to Fourier analysis.

Furthermore, we note that mean square stability of finite-dimensional systems often requires stabilizability and detectability conditions dependent on the state transition matrix, input matrix, and output matrix. For systems with infinite-dimensional measurements, it turns out that for mean square stability of our filter we require (sufficient) notions of stabilizability and detectability that take into account the noise covariance matrix, which is not the case for finite-dimensional systems. This is demonstrated in Section \ref{sec:Stability}.

\subsection{Literature Review}
Infinite-dimensional representations are widely used in the study of {\em distributed-parameter systems}  - see e.g. \cite{Morris,Curtain,Bensoussan}. In such systems, the states and measurements are represented as functions both of time and a continuous-valued index taking values in some domain $\mathcal{D}$. The models may be stochastic \cite{Tzafestas1968, Omatu} or deterministic \cite{Morris}, in either discrete or continuous-time \cite{Falbinf}.  
A critical high-level survey of the approaches, claims, and results in this area has been carried out in \cite{Curtain}, and a broader treatment of the estimation and control of distributed-parameter systems can be found in \cite{Morris,Bensoussan,Emirsajłow2021,ZwartCurtain}. An overview of various observer design methodologies for linear distributed-parameter systems is also given in the survey \cite{2011Survey}. Although this survey does not discuss stochastic systems, the references therein are classified as either early lumping models, where the system is discretized before observer design, and late lumping models, where the system is discretized after observer design. In all the works discussed in \cite{2011Survey}, both the state and measurement spaces are considered infinite-dimensional.

In contrast with the rich body of work above, in this article we examine systems with infinite-dimensional measurements but finite-dimensional states.  complementary problem setup occurs in \cite{inverseproblem}, although in that work the system is assumed to have an infinite-dimensional state space and a finite-dimensional measurement space, which sidesteps the issue of calculating the inverse of an infinite-dimensional object, an issue we will discuss further in Section \ref{subsec:comparisons}. The work in \cite{inverseproblem} uses an orthogonal projection operator to reduce the state to a finite-dimensional subspace and proceeds to establish upper bounds on the resulting discretization error by way of sensitivity analysis of a Ricatti difference equation. A more common approach is the aforementioned early lumping method, where discretization is performed at the start to produce linear models with high-but-finite-dimensional states and measurements. This is the approach taken by \cite{KalmanWithNumericalGPs}, where a Kalman filter is designed for a finite-dimensional subset of the infinite-dimensional measurement, corrupted by the additive Gaussian noise that is spatially white. In this article, we do not perform any such lumping, early or late, and our analysis remains firmly in an infinite-dimensional setting.

See Table \ref{tab:prevworknew} for a concise summary of early derivations of linear distributed-parameter filters, as well as more recent work. The papers therein employ a number of different approaches, including Green's function \cite{Thau1969}, orthogonal projections \cite{Tzafestas1968}, least squares arguments \cite{Tzaf1973}, the Wiener-Hopf method \cite{Omatu}, Bayesian approaches \cite{Tzafestas72}, and $\mathcal{H}_2$ estimation \cite{Morris2020}, to name a representative handful. The derivation in this article is closest to the orthogonal projection argument given in \cite{Tzafestas1968}, although here we deal with a discrete-time system. It should also be noted that \cite{Tzafestas1968} assumes that the observation noise is both temporally and spatially white, whereas here we do not require the latter property. We do, however, introduce the assumption that the observation noise is stationary, which allows the novel derivation of a closed form expression of the optimal gain in Section \ref{subsec:OptimalFilter}. This form has not been presented in any of these previous works.

The model that we propose is relevant to systems where
high-dimensional measurements, for instance from vision, radar, or lidar sensors, are used to estimate a low-dimensional state, such as the position and velocity of an autonomous agent or moving target. To the best of the authors' knowledge, such systems have not been the focus of previous work in distributed parameter filtering. 

\begin{table}[t] 
    \begin{center}
    \caption{Properties and assumptions of previous work in state estimation of distributed-parameter models. The final column describes the work presented in this article.}
    \label{tab:prevworknew}
    \resizebox{\columnwidth}{!}{\begin{tabular}{l c c c c c c c c c c}
        \hline
         \footnotesize \textbf{Reference} & \cite{Tzafestas1968} & \cite{Thau1969} & \cite{Meditch1971} & \cite{Tzafestas72} & \cite{Tzaf1973} & \cite{Omatu} & \cite{BensoussanConference} & \cite{Morris} & \cite{Morris2020} & Ours \\
         \hline
         Finite-Dim. State Space &\xmark & \xmark& \xmark&\xmark &\xmark &\xmark &\xmark & \xmark & \xmark & \cmark \\
         Discrete Time &\xmark & \cmark &\xmark & \cmark &\xmark & \cmark &\xmark & \xmark & \xmark & \cmark \\
         Noise is Stationary on Domain \(\mathcal{D}\) &\xmark &\xmark &\xmark &\xmark &\xmark &\xmark &\xmark & \xmark & \xmark & \cmark \\
         Explicit Optimal Gain &\xmark &\xmark &\xmark & \xmark&\xmark &\xmark &\xmark & \xmark & \xmark & \cmark \\
         \hline
    \end{tabular}}
    \end{center}
\end{table}

\subsection{Contributions}
Our key contribution is the derivation of an exact solution for the optimal linear filter for  discrete-time systems with measurement noise and process noise modeled  by random fields and random vectors respectively (Theorem \ref{TheoremOptimal}, Procedure \ref{alg:kalman}). The assumptions of finite state dimension, and that the measurement noise is stationary on domain $\mathcal{D}$, enable us to derive the optimal gain functional explicitly in terms of a multi-dimensional Fourier transform on the measurement index set $\mathcal{D}$.

This is in contrast to previous works in which the optimal gain is given in terms of a distributed-parameter inverse operator that is either not defined or defined implicitly \cite{Omatu}, \cite{Tzafestas72}. Moreover, this inverse can lead to implementation difficulties and is not always guaranteed to exist: these issues are discussed in further detail in Section \ref{subsec:comparisons}. In comparison, our approach allows us to give sufficient conditions for the system to have a well-defined optimal gain function (Section \ref{sec:FilterDerivation}): in rough terms, the effective bandwidth of the Fourier transform of the measurement kernel should be smaller than that of the noise covariance kernel.\footnote{The term bandwidth here refers to the effective range of ``spatial'' frequencies present in a measurement, not temporal frequencies.}

An advantage of this continuum limit approach is that it enables us to make precise statements about the filters behaviour in the limit when measurement resolution approaches infinity. This paves the way for further analysis, as various finite-point approximation schemes may be formulated and the error compared with the asymptotic error of the optimal infinite-resolution filter in a principled and quantitative manner. Further discussion and analysis of these finite-point approximation schemes will be addressed in future work.

This article contains five significant extensions in comparison to our preliminary work in the conference paper \cite{VarleyACC}. 

The first major development is that this work, by way of a Hilbert space framework, presents necessary and sufficient conditions which guarantee the existence and uniqueness of the optimal gain function. Our previous work \cite{VarleyACC} employed a calculus of variations approach to yield only necessary conditions on the optimal gain function.

The second development is the extension of our proposed algorithm to a $d$-dimensional, not just scalar, measurement domain $\mathcal{D}$, allowing much richer sensor data to be modeled without obscuring the underlying spatial correlations. Examples include camera images ($d=2$), as well as high-resolution lidar and magnetic resonance imaging (MRI)  ($d=3$). This extension entails the use of random fields and Fourier transforms with $d$-dimensional frequency, as opposed to standard random processes and  Fourier transforms on a scalar axis. This has also allowed us to present simulation results that involve two-dimensional images, as opposed to the simpler scalar functions previously presented in \cite{VarleyACC}.

The third development is a weakening of the assumptions on the observation noise. The filter provided in \cite{VarleyACC} was restricted to systems with Gaussian observation noise, whereas the formulation presented here loosens this assumption and instead only requires that the noise be stationary.

The fourth development is that this work clearly defines explicit conditions such that the presented arguments are rigorous, whereas the derivations in \cite{VarleyACC} were of a more preliminary and exploratory nature. This work provides a list of assumptions such that our derived filter is well-defined, and that each step in the derivation is valid.

The fifth and final development is that we explicitly impose {\em stabilizability} and {\em detectability} conditions on the system (Section \ref{sec:StabDetect}). The detectability condition in particular involves the measurement noise covariance function, as well as the deterministic dynamical matrix and measurement function. This strongly contrasts with other works in infinite-dimensional systems theory, which impose conditions of \emph{strong} or \emph{weak} observability \cite{Fuhrmann1973, Megan} that are agnostic to noise.

These extensions yield a clearer picture of the relationship between the system and the optimal filter, and provide a more generalized solution for the optimal gain. 

\begin{figure*}[t]
    \centering
    \begin{subfigure}{0.33\textwidth}
        \includegraphics[scale=0.33]{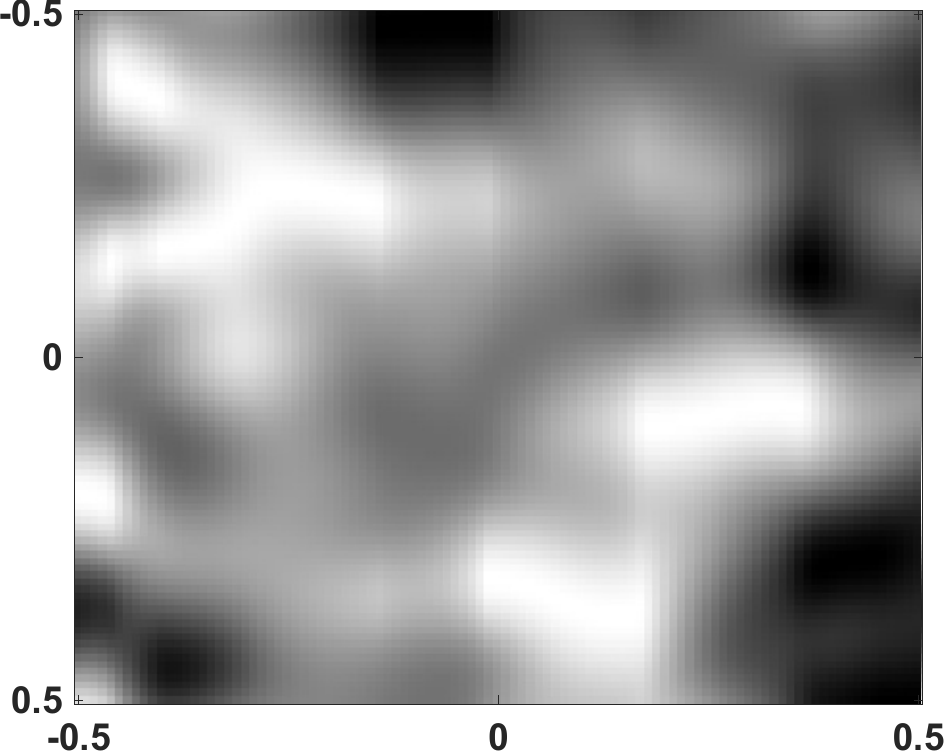}
        \caption{Length scale \(\ell=0.1\)}
    \end{subfigure}\begin{subfigure}{0.33\textwidth}
        \includegraphics[scale=0.33]{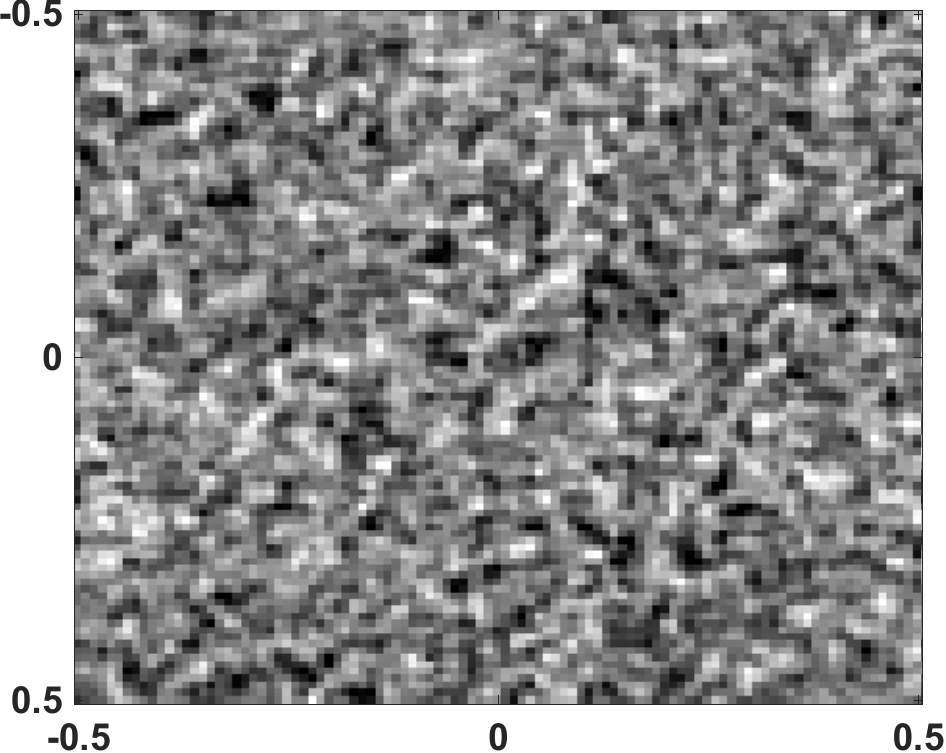}
        \caption{Length scale \(\ell=0.01\)}
    \end{subfigure}\begin{subfigure}{0.33\textwidth}
        \includegraphics[scale=0.33]{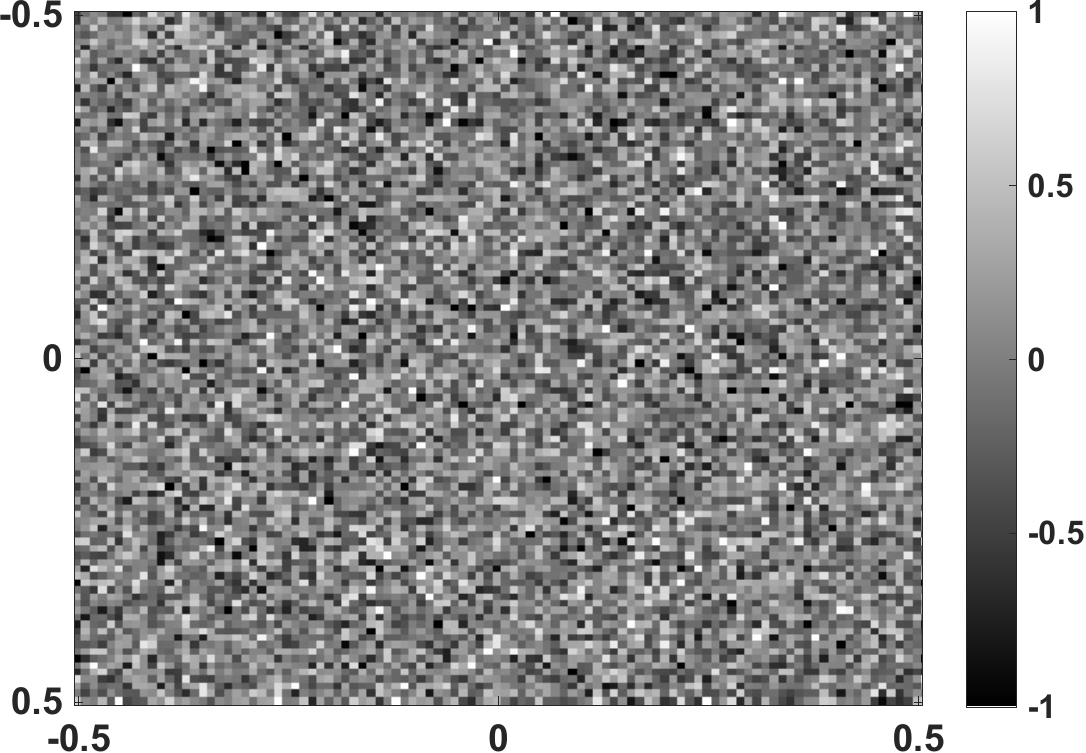}
        \caption{Length scale \(\ell=0.001\)}
    \end{subfigure}
    \caption{Three normalized stationary random fields, spatially discretized into a \(100\times 100\) grid over the domain \([-0.5,0.5]^2\). Each field is zero-mean with a squared exponential covariance function \(R(i,i')\propto e^{-\|i-i'\|_2^2(2\ell^2)^{-1}}\). The length scale \(\ell\) determines how closely each element is influenced by its neighbors, with a larger length scale leading to a stronger correlation. Note that small length scales are often used to approximate ideal white noise.}
    \label{fig:GreyScaleNoise}
\end{figure*}
\subsection{Organization}
This article is organized as follows. Section \ref{sec:Preliminaries} outlines the relevant definitions, theorems, and conditions relating to the Hilbert space of random variables and random fields, as well as the multi-dimensional Fourier transform and notions of system stabilizability and detectability. Section \ref{sec:ModelDefinition} gives our model definition and discusses some concrete examples which our model could be feasibly used to represent. This section also summarizes the key assumptions needed to ensure validity of our analysis, and the motivation behind each assumption. Section \ref{sec:FilterDerivation} derives necessary and sufficient conditions for the optimal gain of the proposed filter (\ref{subsec:OptimalityConditions}), before presenting a unique gain function which satisfies these conditions \ref{subsec:ExplicitSol}. Section \ref{subsec:OptimalFilter} will present the full filter equations and stability of this filter will be discussed in Section \ref{sec:Stability}. Section \ref{sec:Implementation} will give a procedure to implement this filter, with Section \ref{subsec:complexity} discussing computational complexity and Section \ref{subsec:comparisons} drawing comparisons to existing results. Finally, Section \ref{sec:Simulations} presents a simulated system and analyzes the empirical performance of the filter compared to its expected performance.

\section{Preliminaries} \label{sec:Preliminaries}
\subsection{Hilbert Space of Random Variables}
In this work, we denote by \(\mathcal{H}\), the Hilbert space of all real-valued random variables with finite second moments \cite[Ch. 20]{Fristedt}. 
The inner product of any two elements \(u,v\in\mathcal{H}\) is defined by \(\langle u,v\rangle=E[uv].\) An inner product operating on two random vectors \(u,v\) in the Hilbert space \(\mathcal{H}^n\) of random \(n\)-vectors is defined by \(\langle u,v\rangle_\mathcal{H}=E[u^\top v]\). The derivation used in this article involves defining the state of our observed system as a vector residing in a Hilbert space and then selecting an error-minimizing estimate that resides in a subspace. The following theorem \cite[p.51, Th. 2]{vectoropt} is fundamental to this approach.
\begin{theorem}[Hilbert Projection Theorem]\label{HilbertProjTheorem}
Let \(\mathcal{G}\) be a Hilbert space
and \(M\) a closed subspace of \(\mathcal{G}.\) Corresponding to any vector \(x\in \mathcal{G}\), there exists a
unique vector \(m_0\in M\) such that \(||x - m_0|| \leq ||x - m|| \ \forall \ m\in M.\)
Furthermore, a necessary and sufficient condition that \(m_0 \in M\) is the unique minimizing vector is that \((x-m_0) \perp M.\)
\end{theorem}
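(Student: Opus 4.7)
The plan is to split the theorem into three parts: existence of a minimizer, uniqueness of the minimizer, and the orthogonality characterization. The central tool throughout will be the parallelogram identity $\|a+b\|^2 + \|a-b\|^2 = 2\|a\|^2 + 2\|b\|^2$, which holds in any Hilbert space and encodes the geometry of the induced norm in a way that the triangle inequality alone does not.

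For existence, I would set $d := \inf_{m \in M}\|x - m\|$ and pick a minimizing sequence $\{m_n\} \subset M$ with $\|x - m_n\| \to d$. Applying the parallelogram identity to the vectors $x - m_n$ and $x - m_k$, and using that $(m_n + m_k)/2 \in M$ implies $\|x - (m_n+m_k)/2\| \geq d$, I would obtain an inequality of the form $\|m_n - m_k\|^2 \leq 2\|x-m_n\|^2 + 2\|x-m_k\|^2 - 4d^2$, whose right-hand side tends to zero as $n,k \to \infty$. Hence $\{m_n\}$ is Cauchy in $\mathcal{G}$; by completeness it converges, the limit $m_0$ lies in $M$ because $M$ is closed, and continuity of the norm yields $\|x - m_0\| = d$. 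Uniqueness then falls out of the same identity applied to any two candidate minimizers $m_0, m_0' \in M$: their midpoint is in $M$, and the identity forces $\|m_0 - m_0'\|^2 \leq 0$.

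For the orthogonality characterization, sufficiency follows from the Pythagorean decomposition $\|x - m\|^2 = \|x - m_0\|^2 + \|m_0 - m\|^2$ applied to any $m \in M$, using that $m_0 - m \in M$ and $(x - m_0) \perp M$. Necessity I would establish by contraposition: if there exists $\tilde{m} \in M$ with $\langle x - m_0, \tilde{m}\rangle \neq 0$, normalize so that $\|\tilde m\| = 1$ and consider the competitor $m_0 + \langle x - m_0, \tilde m\rangle \tilde m \in M$. Expanding its distance to $x$ gives $\|x - m_0\|^2 - |\langle x - m_0, \tilde m\rangle|^2 < \|x - m_0\|^2$, contradicting the minimality of $m_0$.

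The main obstacle is the existence step: one has to resist the temptation of extracting a weakly convergent subsequence (which would only give an element of the closure of the convex hull of the sequence) and instead push the Cauchy argument through the parallelogram identity. The delicate point is ensuring the midpoints $(m_n + m_k)/2$ lie in $M$, which uses that $M$ is a linear subspace; the same argument in fact goes through for any closed convex subset, which is worth noting for later generalizations. Once existence is secured, uniqueness and the orthogonality characterization are essentially algebraic consequences of the inner product structure.
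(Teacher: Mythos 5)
Your proof is correct. There is nothing in the paper to compare it against: the theorem is quoted from the cited reference (Luenberger's \emph{Optimization by Vector Space Methods}) and used as a black box, with no proof given in the paper itself. Your write-up is the standard textbook argument --- minimizing sequence plus the parallelogram identity for existence, the same identity for uniqueness, the Pythagorean decomposition for sufficiency of orthogonality, and the perturbation competitor $m_0 + \langle x-m_0,\tilde m\rangle \tilde m$ for necessity --- and every step goes through; your observation that only closedness and convexity of $M$ are really used is also accurate. One minor quibble with an aside: the weak-compactness route does not merely land you in the closure of the convex hull --- a closed subspace (or closed convex set) is weakly closed and the norm is weakly lower semicontinuous, so that route also yields a minimizer; it is simply heavier machinery than the elementary Cauchy argument you correctly chose.
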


\subsection{Random Fields}
This article models measurement noise as an additive random field and some care is required when manipulating these objects. Rigorous and extensive analysis of these fields can be found in a number of textbooks \cite{adlerrandom}, \cite{doobstochastic}, \cite{janson}. A brief summary of relevant definitions is given below.

Let \((\Theta,\mathcal{F}, P)\) be a probability space and \(\mathbb{R}^d\) a Euclidian space endowed with Lebesgue measure. A random field is a function \(v:\mathbb{R}^d\times \Theta\rightarrow \mathbb{R}^m\) which is measurable with respect to the product measure on \(\mathbb{R}^d\times\Theta\).\footnote{For concision, the dependence on \(\theta\) will usually be suppressed.} A centered w.s.s. (wide-sense stationary) random field has the  properties that i) \(E[v(i)] = 0\) for all \(i\); ii) the random vector \(v(i) \in\mathcal{H}\) for all \(i\); and iii) the covariance function \(R\) only depends on the difference between index points, so that for any \(i_1,i_2\in\mathbb{R}^d, \ E[v(i_1)v^\top(i_2)]=R(i_1-i_2)\). Such a field that is stationary over a domain \(\mathcal{D}\) is often referred to as \(\mathcal{D}-\)stationary noise. In the special case where the covariance is also independent of direction, the noise is said to be isotropic. The stationarity condition arises in situations
where the noise statistics possess translational invariance, in other words, the noise statistics look the same everywhere throughout the domain. Some sources of noise may also be locally stationary, in that they are
closely approximated by stationary processes over finite
windows. Spatial stationarity is a common assumption in a
number of diverse contexts such as image denoising \cite{Corke, Cox, Fehrenbach}, radio signal propagation \cite{stationaryRadio}, and geostatistics \cite[Ch. 74]{Geostat}. Figure \ref{fig:GreyScaleNoise} gives three example realizations of a stationary random field with a squared exponential covariance function and varying length scales.

\subsection{The Multi-Dimensional Fourier Transform}
A critical step in this article involves the transformation of an integral equation to an algebraic equation, facilitated by the multi-dimensional Fourier transform, which we now define.
\begin{definition}[Multi-Dimensional Fourier Transform]\cite[Ch. 1]{stein_fourier}\label{def:fourier}
The multi-dimensional Fourier transform of any \(f\in L_1(\mathbb{R}^d,\mathbb{R})\) is given by \(\mathcal{F}\{f\}\in L_\infty(\mathbb{R}^d,\mathbb{R})\) and is defined as
\begin{align}
    (\mathcal{F}\{f\})(\omega)\triangleq \int_{\mathbb{R}^d}f(i)e^{-2\pi j\omega\cdot i}di,\nonumber
\end{align}
where \(\omega\) is a \(d\)-dimensional vector of spatial frequencies. We will denote the Fourier transform of any function \(f\) by \(\Bar{f}.\) The inverse Fourier transform is defined by
\begin{align}
    (\mathcal{F}^{-1}\{f\})(i)=\int_{\mathbb{R}^d}\Bar{f}(\omega)e^{2\pi ji\cdot \omega}d\omega.\nonumber
\end{align}
This form of the inverse generally only holds when \(\Bar{f}(\omega)\) is also absolutely integrable \cite{follandfourier}.
\end{definition}

\subsection{Stabilizability and Detectability}\label{sec:StabDetect}
This article will employ notions of stabilizability and detectability, which we define as follows:
\begin{definition}[Stabilizability]{\cite[p. 342]{AndersonandMoore}}\label{def:stabilizable}
The matrix pair \((A,B)\) is \textit{stabilizable} if there exists a matrix \(C\) such that
\begin{align}
    \rho(A+BC^{\top})<1,\nonumber
\end{align}
where \(\rho(\cdot)\) is the spectral radius.
\end{definition}
\begin{definition}[Detectability]{\cite[p. 342]{AndersonandMoore}}\label{def:detectable}
The matrix pair \((A,B)\) is \textit{detectable} if \((A^\top, B)\) is stabilizable.
\end{definition}

\section{Problem Formulation and Assumptions}\label{sec:ModelDefinition}
Consider the linear discrete-time system
\begin{align}
    x_{k+1}&=Ax_k+w_k, \label{eq:statedynamics}
\end{align}
with state \(x_k\in \mathbb{R}^n\), additive process noise \(w_k\in \mathbb{R}^n\), state transition matrix \(A \in \mathbb{R}^{n\times n}\), and time index \( k\in \mathbb{N}\). We omit a control input for clarity; the results derived in this article are still valid with minor changes if a control input is present and available to the estimator.
The state is measured via an infinite-dimensional measurement field defined on domain \(\mathbb{R}^d\), that is,
    \begin{align}\label{eq:observation}
        z_k(i)=\gamma(i)x_k+v_k(i).
    \end{align}
    Note that the measurement domain \(i\in\mathbb{R}^d\), measurement \(z_k:\mathbb{R}^d\rightarrow \mathbb{R}^m\), measurement noise \(v_k:\mathbb{R}^d\rightarrow \mathbb{R}^m\), and measurement function \(\gamma:\mathbb{R}^d\rightarrow \mathbb{R}^{m\times n}\).
    
    The schema presented in \eqref{eq:statedynamics} and \eqref{eq:observation} are well-suited to systems where the state can be represented by a low-dimensional model, but the observations are represented by a very high or infinite-dimensional model. A major area where such systems typically hold is robotics, in the context of Simultaneous Localization and Mapping (SLAM). The state to be estimated is often the pose of a mobile robot, which could be described, for example, as a 9-vector that represents the robot position, velocity, and acceleration in \(\mathbb{R}^3\). The state updates are generally linear, obeying Newtonian dynamics. The measurement sensors, on the other hand, can be of a much higher dimensionality.
    
    When lighting conditions are adequate, optical cameras provide a rich source of data, and even a relatively cheap 10 megapixel camera will provide a ten million dimensional image every frame. In these conditions, vision-based localization can be particularly suited to GPS-denied areas, such as extra-terrestrial environments \cite{spacenav}. In the case of a colour camera, the measurement equation described \eqref{eq:observation} would apply with a \(3\)-vector \((m=3)\) describing the RGB intensity of each pixel, defined on a \(2-\)dimensional image domain \((d=2)\).
    
    Under low or no lighting conditions, where optical cameras do not provide sufficient data, lidar is often employed. This setup has applications in the mining sector \cite{LHDTopo, LHDOverview}, where underground tunnels or other "visually degraded" environments must be traversed \cite{lowlightSLAM}. Although not generally as ``pixel-dense'' as optical cameras, the commonly used SICK LMS511 is capable of providing up to 4560 samples per scan, and higher-end lidar modules can provide upwards of a million samples per scan. This schema is also suited to target-tracking problems, where the state to be estimated now represents a moving body, and the measurement sensors are often radar, lidar, optical, or an ensemble thereof \cite{RadLidVis}.
    
    We assume the additive measurement noise in \eqref{eq:observation} is a w.s.s. centered random field with bounded covariance function \(R\), so that each element of \(z(i)\) is in \(\mathcal{H}\) for all \(i\in \mathbb{R}^d\). We also assume that \(\gamma\) is bounded and absolutely integrable over domain \(\mathbb{R}^d\) under the Frobenius or other matrix norm.
    
The process noise and measurement noise are such that $\forall j,k\in \mathbb{N}$ and $i,i'\in \mathbb{R}^d$,
\begin{align}
    E[w_k]&=0\nonumber\\
    E[v_k(i)]&=0\nonumber\\
    E[v_k(i) w_j^\top] &= 0\nonumber\\
    E[w_kw_j^\top]&=Q\delta_{j-k}, &&Q\in \mathbb{R}^{n\times n}\nonumber\\
    E[v_k(i)v_j^\top(i')]&=R\big(i-i'\big)\delta_{j-k}, &&R\big(i-i'\big)\in \mathbb{R}^{m\times m}.\label{eq:obsnoisecov}
\end{align}
Here \(Q\) is a positive-definite matrix, \(\delta_{j-k}\) is the discrete impulse, and \(R\) is bounded and absolutely integrable. 

We will examine the performance of a linear filter of the form
\begin{align}
    \begin{cases}
    \hat{x}_{k}&=A\hat{x}_{k-1}+K_k[z_k-\hat{z}_k] \\
    \hat{z}_k(i)&=\gamma(i)A\hat{x}_{k-1},
    \end{cases} \label{eq:predictor}
\end{align}
under a mean square error (MSE) criterion \(E[\|x_k-\hat{x}_k\|_2^2]\). Here \(K_k\) is a linear mapping from the space of \(\mathbb{R}^m\)-valued fields on \(\mathbb{R}^d\) to \(\mathbb{R}^n\). The innovation term we will denote by \(s_k\triangleq z_k-\hat{z}_k\). Motivated by the Riesz representation theorem\cite[p. 188]{Kreyszig}, we assume the integral form
\begin{align}
    K_ks_k \triangleq \int_{\mathbb{R}^d} \kappa_k (i)s_k(i) di,\label{eq:kappaform}
\end{align}
 where \(\kappa_k :\mathbb{R}^d\rightarrow\mathbb{R}^{n\times m}\) is an optimal gain function to be determined.
\\We now present the assumptions we need, but before we do so, we introduce the following useful terms.
    \begin{align}
        f(i)&\triangleq\mathcal{F}^{-1}\{\bar{\gamma}^\top \bar{R}^{-1}\}\in \mathbb{R}^{n\times m}, \nonumber\\ S&\triangleq\int_{\mathbb{R}^d} f(i)\gamma(i)\,di \in \mathbb{R}^{n\times n}.\label{def:S}
    \end{align}
These terms are well-defined given the assumptions below, and we will also show in Section \ref{sec:Stability} that \(S\) is symmetric and positive semi-definite, and hence there exists a unique symmetric positive semi-definite matrix \(G\) such that \(S=GG\)\cite[p. 439]{Horn13}. This matrix \(G\) will be termed the principal square root of \(S\).
A summary of the key assumptions stated for our model is now presented.
\begin{restatable}{assumption}{assumgamma}\label{ass:gamma}
    \(\gamma\in L_1(\mathbb{R}^d,\mathbb{R}^{n\times m})\cap L_\infty(\mathbb{R}^d,\mathbb{R}^{n\times m})\)
\end{restatable}
\begin{restatable}{assumption}{assumR}
    \label{ass:R} \(R\in L_1(\mathbb{R}^d,\mathbb{R}^{m\times m})\cap L_\infty(\mathbb{R}^d,\mathbb{R}^{m\times m})\)
\end{restatable}
\begin{restatable}{assumption}{assumbarR}
    \label{ass:barR} \(\bar{R}(\omega)\in \mathbb{R}^{m\times m}\) is invertible for almost all \(\omega\)
\end{restatable}
\begin{restatable}{assumption}{assumRGamma}
    \label{ass:RGamma} \(\bar{\gamma}^\top\bar{R}^{-1}\in L_1(\mathbb{R}^d, \mathbb{R}^{n\times m})\cap L_2(\mathbb{R}^d, \mathbb{R}^{n\times m})\) and has an inverse fourier transform in \(L_1(\mathbb{R}^d,\mathbb{R}^{n\times m})\)
\end{restatable}
\begin{restatable}{assumption}{assumkappa}
    \label{ass:kappa} \(\kappa_k\in L_1(\mathbb{R}^d,\mathbb{\mathbb{R}}^{n\times m})\cap L_2(\mathbb{R}^d,\mathbb{\mathbb{R}}^{n\times m}) \ \forall k\)
\end{restatable}
\begin{restatable}{assumption}{assumstability}
    \label{ass:stability} The matrix pair \((A,Q)\) is stabilizable.
\end{restatable}
\begin{restatable}{assumption}{assumdetectability}
    \label{ass:detectability} The matrix pair \((A,G)\) is detectable.
\end{restatable}
Assumption \ref{ass:gamma} and assumption \ref{ass:R} will be used in Section \ref{subsec:ExplicitSol} to ensure that an appropriate Fourier transform of \(\gamma\) and \(R\) is well-defined. Assumption \ref{ass:gamma} is also needed to extend the Fubini-Tonelli theorem to wide-sense-stationary random fields, as shown in Appendix \ref{app:Fub-Ton}. Assumption \ref{ass:barR} and assumption \ref{ass:RGamma} will be needed in Section \ref{subsec:ExplicitSol} to ensure the existence of an explicit form of the optimal gain function \(\kappa\). Assumption \ref{ass:kappa} is imposed to ensure that the integral \eqref{eq:kappaform} is well-defined by Theorem \ref{Fub-Ton-Custom}. Whilst there are notions of observability in infinite-dimensional systems theory (cf.~\cite{Fuhrmann1973, Megan}), in this article we do not require them.
Instead, we shall assume that \((A,Q)\) is {\em stabilizable} (assumption \ref{ass:stability}) and that \((A,G)\) is {\em detectable} (assumption \ref{ass:detectability}) in the sense of Definition \ref{def:stabilizable} and Definition \ref{def:detectable}, where \(G\) is a matrix-valued functional of the measurement function \(\gamma\) and the measurement noise covariance \(R\). This will ensure that the filter error covariance matrix converges as time grows. It should be noted that traditional detectability conditions do not typically depend on properties of the measurement noise \cite{AndersonandMoore,Fuhrmann1973, Megan}.

We verify that the filter integral \eqref{eq:kappaform} is well-defined by adapting the classical Fubini-Tonelli theorem \cite[Cor. 13.9]{schilling_2005} to w.s.s. random fields as follows:
\begin{theorem}[Fubini-Tonelli for w.s.s. random fields] \label{Fub-Ton-Custom}
Let \(v:\mathbb{R}^d\times \Theta\rightarrow\mathbb{R}^m\) be a w.s.s. random field defined on the probability space \((\Theta,\mathcal{F},P)\), with \(\theta\in \Theta\) representing a point in the sample space \(\Theta\), and with bounded covariance function \(\Sigma\). Let \(s:\mathbb{R}^d\times \Theta\rightarrow\mathbb{R}^m\) be a random field given by \(s\triangleq v+\gamma x\), where \(\gamma\in L_\infty(\mathbb{R}^d,\mathbb{R}^{m\times n})\) and \(x\) is a random vector with elements in \(\mathcal{H}\) and covariance matrix \(P\). Also let \(\kappa:\mathbb{R}^d\rightarrow\mathbb{R}^{n\times m}\) be a matrix-valued function, absolutely integrable under the Frobenius matrix norm. Each component of the vector-valued function \(\kappa(\cdot)s(\cdot,\cdot)\) is assumed to be measurable with respect to the product \(\sigma\)-algebra \(\mathcal{L}\times\mathcal{F}\), where \(\mathcal{L}\) is the \(\sigma\)-algebra corresponding to the \(d\)-dimensional Lebesgue measure. Then
    \[\int_{\mathbb{R}^d}\int_\Theta \|\kappa(i)s(i,\theta)\|_1P(d\theta) di<\infty,\]
\end{theorem}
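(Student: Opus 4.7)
The plan is to verify the hypothesis of the classical Tonelli theorem by producing a finite upper bound on the iterated integral in a convenient order. Since \(\|\kappa(i)s(i,\theta)\|_1\) is nonnegative and, by hypothesis, jointly measurable with respect to \(\mathcal{L}\times\mathcal{F}\), Tonelli applies and the order of integration is immaterial. It therefore suffices to bound
\[
\int_{\mathbb{R}^d}\!\!\int_\Theta \|\kappa(i)s(i,\theta)\|_1\, P(d\theta)\, di \;=\; \int_{\mathbb{R}^d} E\!\left[\|\kappa(i)s(i)\|_1\right] di.
\]

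First I would reduce the inner expectation to elementary scalar quantities. Writing \(\kappa(i)=[\kappa_{jk}(i)]\) componentwise gives \(\|\kappa(i)s(i)\|_1 \le \sum_{j,k} |\kappa_{jk}(i)|\,|s_k(i)|\), and applying Jensen's inequality (equivalently, Cauchy--Schwarz in \(\mathcal{H}\)) yields
\[
E\!\left[\|\kappa(i)s(i)\|_1\right] \le \sum_{j,k} |\kappa_{jk}(i)|\, \bigl(E[s_k(i)^2]\bigr)^{1/2}.
\]
I would then establish a uniform-in-\(i\) bound on \(E[s_k(i)^2]\). Substituting \(s = v + \gamma x\) and using \((a+b)^2 \le 2a^2 + 2b^2\), it suffices to bound \(E[v_k(i)^2]\) and \(E[(\gamma_k(i)x)^2]\) separately: the former equals \(\Sigma_{kk}(0)\) and is finite because \(\Sigma\) is bounded, while the latter equals \(\gamma_k(i) P \gamma_k(i)^\top\), which is uniformly bounded because \(\gamma\in L_\infty(\mathbb{R}^d,\mathbb{R}^{m\times n})\) and \(P\) has finite entries. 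This produces a constant \(C<\infty\), independent of \(i\), with \(E[s_k(i)^2] \le C\).

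With this uniform bound in place, the integrand is majorized by \(\sqrt{C}\) times a fixed constant multiple of \(\|\kappa(i)\|_F\), since the entrywise \(\ell^1\) and Frobenius norms on \(\mathbb{R}^{n\times m}\) are equivalent. Integrating over \(\mathbb{R}^d\) and invoking absolute integrability of \(\kappa\) under the Frobenius norm yields the finite bound. The only subtlety worth flagging is that the theorem does not posit independence of \(v\) and \(x\), so the cross term in \(E[s_k(i)^2]\) need not vanish; the \((a+b)^2 \le 2a^2+2b^2\) device (or an equivalent Cauchy--Schwarz bound on the cross term) sidesteps this without assuming anything more than the hypotheses already given. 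Beyond this bookkeeping, no step in the argument appears to be genuinely hard --- the content of the theorem is really that the stated hypotheses already buy uniform second-moment control of \(s\) pointwise in \(i\).
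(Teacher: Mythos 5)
Your proposal is correct and follows essentially the same route as the paper's own proof: bound the inner expectation via Jensen's inequality by a second moment, control that second moment using the boundedness of the covariance \(\Sigma\) at lag zero and \(\gamma\in L_\infty\), and then integrate against \(\|\kappa(i)\|_F\), whose absolute integrability (together with a H\"older-type use of \(\|\gamma\|_{L_\infty}\)) closes the argument; the paper phrases this with the \(2\)-norm and a trace identity, you phrase it entrywise, but these differ only by finite-dimensional norm equivalence. One genuine point of difference is worth noting: the paper's proof writes \(E[s(i)s^\top(i)]=\Sigma(0)+\gamma(i)P\gamma^\top(i)\), which tacitly assumes the cross-covariance between \(v(i)\) and \(x\) vanishes (true in the filtering application, where \(x\) is the prediction error and the measurement noise is white in time, but not among the hypotheses of the theorem as stated), whereas your \((a+b)^2\le 2a^2+2b^2\) device bounds the cross term without that assumption, so your argument is valid verbatim under the stated hypotheses and is marginally more general. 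A very minor bookkeeping point: \(E[(\gamma_k(i)x)^2]=\gamma_k(i)E[xx^\top]\gamma_k(i)^\top\) rather than \(\gamma_k(i)P\gamma_k(i)^\top\) unless \(x\) is centered, but since the elements of \(x\) lie in \(\mathcal{H}\) the second-moment matrix is finite either way and your uniform bound stands.
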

and furthermore 
\[\int_\Theta\int_{\mathbb{R}^d}\kappa(i)s(i,\theta)diP(d\theta)=\int_{\mathbb{R}^d}\int_\Theta \kappa(i)s(i,\theta)P(d\theta)di. \]
The proof of Theorem \ref{Fub-Ton-Custom} is given in Appendix \ref{app:Fub-Ton}

\section{Optimal Linear Filter Derivation} \label{sec:FilterDerivation}

We wish to find the optimal state estimate \(\hat{x}_k\), where \(\hat{x}_k\) is some bounded linear functional of all measurements up to time \(k.\) Optimality in this work will refer to the estimate which minimizes the mean square error \(E[\|x_k-\hat{x}_k\|^2_2]\). This random \(n\)-vector minimization problem is equivalent to \(n\) random scalar minimization problems, as minimization of each component norm will minimize the entire vector norm.
Each component of \(\hat{x}_k\) is then the orthogonal projection of the corresponding component of \(x_k\) onto the space of all scalar-valued linear functionals of the preceding measurements.
Let \(x_k^l\in \mathcal{H}\) denote the \(l^{th}\) component of the state vector \({x}_k\). Also let our observation at time \(k\) be denoted by the random \(m\)-vector valued function \({z}_k:\mathbb{R}^d\rightarrow \mathbb{R}^m\), assumed measurable with respect to the product measure on \(\mathbb{R}^d\times\Theta\), and the 
 set of all possible measurements at time \(k\)  be \(\mathcal{Z}_k\). Section \ref{subsec:OptimalityConditions} will state the optimality conditions of the filter, and Section \ref{subsec:ExplicitSol} will derive an explicit solution that satisfies these conditions. A brief reference of the spaces and variables used in the following derivation is given in Table \ref{tab:Spaces} and Table \ref{tab:variables} respectively.
\begin{table}[ht]
    \centering
    \caption{Spaces used in Section \ref{sec:FilterDerivation} derivations.}
    \label{tab:Spaces}
    \begin{tabular}{|p{0.1\linewidth}|p{0.80\linewidth}|}
    \hline
         \textbf{Symbol}&\textbf{Description}  \\
         \hline
          \(\mathcal{H}\)&Hilbert space of random variables with zero mean and finite second moment.\\ 
          \hline
          \(\mathcal{Z}_k\)&Space of all possible measurements \(z_k(\cdot)\).\\ 
          \hline
          \(M_{k-1}\)&Space of all possible elements of \(\mathcal{H}\) expressible via \eqref{eq:Mdef}.\\
          \hline
          \(M_{k}^-\)&Space of all possible elements of \(\mathcal{H}\) expressible via \eqref{eq:Mplusdef}.\\ 
          \hline
          \(\mathcal{S}_k\)&Space of all possible elements of \(\mathcal{H}\) expressible via \eqref{eq:Sdef}.\\
          \hline
    \end{tabular}
\end{table}

 \begin{table}[ht]
    \centering
    \caption{Variables used in Section \ref{sec:FilterDerivation} derivations.}
    \label{tab:variables}
    \begin{tabular}{|p{0.1\linewidth}|p{0.80\linewidth}|}
    \hline
         \textbf{Symbol}&\textbf{Description}  \\
         \hline
          \(x\)&System state \\
          \hline
          \(z\)& Measurement\\
          \hline
          \(\hat{x}\)&Optimal estimate of \(x\) given past measurements \\
          \hline
          \(\hat{z}\)& Optimal estimate of \(z\) given past measurements\\ 
          \hline
          \(s\)& Innovation vector \(z-\hat{z}\)\\
          \hline
          \(\phi,\alpha\)&Arbitrary vectors in \(L_1(\mathbb{R}^d,\mathbb{R}^{1\times m})\cap L_2(\mathbb{R}^d,\mathbb{R}^{1\times m})\)\\
          \hline
          \(m\)&Arbitrary element expressible via \eqref{eq:Mdef}\\ 
          \hline
          \(\tilde{s}\)&Arbitrary element expressible via \eqref{eq:Sdef}\\ 
          \hline
          \(\beta\)&Optimal post-measurement correction term\\
          \hline
          \(\kappa\)& Optimal kernel which generates \(\beta\)\\
          \hline
          \((\cdot)_k\)&The object \((\cdot)\) at time step \(k\)\\
          \hline
    \end{tabular}
\end{table}
\subsection{Optimality Conditions}\label{subsec:OptimalityConditions}
In this section, we will prove the following key theorem:
\begin{theorem} \label{TheoremOptimal}
    The matrix kernel \(\kappa_k(i)\in L_1(\mathbb{R}^d,\mathbb{R}^{n\times m})\cap L_2(\mathbb{R}^d,\mathbb{R}^{n\times m})\) for \eqref{eq:predictor} is optimal in the mean square error sense if and only if
    \begin{align}
        &\int_{\mathbb{R}^d}\kappa_k(i)\bigg(\gamma(i)P_{k|k-1}\gamma^\top(i')+R(i-i')\bigg)di=P_{k|k-1}\gamma^\top(i').\label{eq:ACCcondition}
    \end{align}
    where \(P_k\) and \(P_{k|k-1}\) denote the posterior covariance matrix and prior covariance matrix of the filter respectively.
\end{theorem}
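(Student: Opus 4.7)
My approach is to apply the Hilbert Projection Theorem (Theorem \ref{HilbertProjTheorem}) componentwise. For each $l$, the optimal $\hat{x}_k^l$ is the orthogonal projection of $x_k^l \in \mathcal{H}$ onto the closed subspace of $\mathcal{H}$ generated by bounded linear functionals of $z_1,\dots,z_k$. This subspace admits an orthogonal decomposition into $M_{k-1}$ and the innovation subspace $\mathcal{S}_k$ whose elements are integrals $\int \phi(i) s_k(i)\,di$ with $\phi \in L_1(\mathbb{R}^d,\mathbb{R}^{1\times m})\cap L_2(\mathbb{R}^d,\mathbb{R}^{1\times m})$. Orthogonality of $x_k-\hat{x}_k$ to $M_{k-1}$ is automatic: writing $x_k = A x_{k-1}+w_{k-1}$, the inductive hypothesis gives $x_{k-1}-\hat{x}_{k-1}\perp M_{k-1}$, and $w_{k-1}\perp M_{k-1}$ by independence from past data; hence $A\hat{x}_{k-1}$ is already the projection of $x_k$ onto $M_{k-1}$. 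The content of the theorem therefore reduces to orthogonality of $x_k-\hat{x}_k$ against $\mathcal{S}_k$.

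Setting $e_{k|k-1}:=x_k-A\hat{x}_{k-1}$, the innovation becomes $s_k(i)=\gamma(i)e_{k|k-1}+v_k(i)$. Using the independence of $v_k$ from past noise, a short computation yields
\begin{align}
E[e_{k|k-1} s_k^\top(i')] &= P_{k|k-1}\gamma^\top(i'),\nonumber\\
E[s_k(i) s_k^\top(i')] &= \gamma(i)P_{k|k-1}\gamma^\top(i')+R(i-i').\nonumber
\end{align}
The orthogonality condition, stacked across components, is $E\big[(x_k-\hat{x}_k)\int\phi(i)s_k(i)\,di\big]=0\in\mathbb{R}^n$ for every admissible $\phi$. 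Substituting $x_k-\hat{x}_k = e_{k|k-1}-\int\kappa_k(i)s_k(i)\,di$, twice applying Theorem \ref{Fub-Ton-Custom} to interchange expectation and spatial integration, and inserting the two cross-covariances, this condition collapses to
\begin{align}
\int_{\mathbb{R}^d}\!\Big[P_{k|k-1}\gamma^\top(i')-\!\int_{\mathbb{R}^d}\!\kappa_k(i)\big(\gamma(i)P_{k|k-1}\gamma^\top(i')+R(i-i')\big)di\Big]\phi^\top(i')\,di' = 0,\nonumber
\end{align}
required to hold for every admissible $\phi$. A standard density argument then forces the bracketed $n\times m$ matrix to vanish for almost every $i'$, which is exactly \eqref{eq:ACCcondition}.

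Sufficiency is obtained by reversing the implications: if $\kappa_k$ satisfies \eqref{eq:ACCcondition} then testing against arbitrary $\phi$ and reapplying Fubini delivers orthogonality of $x_k-\hat{x}_k$ to a dense subset of $\mathcal{S}_k$, and hence to its closure by continuity of the inner product; combined with the automatic orthogonality to $M_{k-1}$, Theorem \ref{HilbertProjTheorem} certifies MSE optimality. The principal technical obstacle is not algebraic but measure-theoretic: both interchanges of expectation with spatial integration must be rigorously justified, and this is precisely where Theorem \ref{Fub-Ton-Custom}, Assumption \ref{ass:kappa}, and the bounded-covariance hypothesis on $R$ enter. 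A secondary care point is the density argument promoting ``vanishes against every $\phi$'' to ``vanishes almost everywhere'', which is routine once measurability of the bracketed integrand has been established.
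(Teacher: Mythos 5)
Your proposal is correct and takes essentially the same route as the paper: projection onto $M_{k-1}\oplus\mathcal{S}_k$ via the Hilbert Projection Theorem, interchange of expectation and spatial integration justified by Theorem \ref{Fub-Ton-Custom}, and the arbitrariness of the test function to force the stacked matrix condition \eqref{eq:ACCcondition} to hold for almost every $i'$. The only cosmetic difference is that you establish $\hat{x}_{k|k-1}=A\hat{x}_{k-1}$ inductively up front, whereas the paper keeps $\hat{x}_{k|k-1}=\proj_{M_{k-1}}x_k$ abstract in this proof and derives that identity afterwards.
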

Before we provide a proof of Theorem \ref{TheoremOptimal}, we must first define some important sets, and projections onto those sets.\\
Consider the set of all possible elements of \(\mathcal{H}\) which can be expressed as a bounded  linear integral functional operating over all previous measurements:
\begin{align}
    M_{k-1}\triangleq \bigg\{m_{k-1}\in \mathcal{H}: m_{k-1}=\sum_{j=1}^{k-1}\int_{\mathbb{R}^d} \phi_j(i)z_j(i) di\bigg\},\label{eq:Mdef}&\\\text{ for some }\phi_j \in L_1(\mathbb{R}^d,\mathbb{\mathbb{R}}^{1\times m})\cap L_2(\mathbb{R}^d,\mathbb{\mathbb{R}}^{1\times m})&.\nonumber
\end{align}
Note that \(M_{k-1}\) is a subspace of \(\mathcal{H}\). We will also assume for the moment that \(M_{k-1}\) is a closed subspace of \(\mathcal{H}\).
Suppose that we are already in possession of the optimal estimate of \(x_k^l\) given the previous measurements. This estimate is the orthogonal projection of \(x_k^l\) onto \(M_{k-1}\),
\begin{align}
    \hat{x}_{k|k-1}^l\triangleq\proj_{M_{k-1}}x_k^l.\nonumber
\end{align}
Given the measurements associated with \(M_{k-1},\; \hat{x}_{k|k-1}^l\) is the best estimate of \(x_k^l\) in the sense that it is the element of \(M_{k-1}\) that is the unique \(\mathcal{H}\)-norm minimizing vector in relation to the \(l^{th}\) component of the true state.\par

Let \(z_k^p\) be the \(p^{th}\) element of \({z}_k\) and \(\hat{z}_k^p\) denote the orthogonal projection of \(z_k^p\) onto \(M_{k-1}\),
\begin{align}
    \hat{z}_k^p(i)\triangleq\proj_{M_{k-1}}z_k^p(i)\quad \forall i.\nonumber
\end{align}
The measurement innovation vector is \(s_k(i)\triangleq z_k(i)-\hat{z}_k(i)\), the \(p^{th}\) entry of which is denoted \(s^p_k(i)\in \mathbb{R}.\) Note that for all \(m_{k-1}\in M_{k-1}\),
\begin{align}\label{eq:orthospaces}
    s^p_k(i)\perp m_{k-1}\quad \forall i.
\end{align}
and let
\begin{align}
    \mathcal{S}_k\triangleq \bigg\{\tilde{s}_k\in \mathcal{H}:\tilde{s}_k=\sum_{p=1}^m\int_{\mathbb{R}^d}\alpha_k^p(i)s_k(i)di\bigg\},\label{eq:Sdef}\\
    \text{for some }\alpha_k^p\in L_1(\mathbb{R}^d,\mathbb{\mathbb{R}}^{1\times m})\cap L_2(\mathbb{R}^d,\mathbb{\mathbb{R}}^{1\times m}).\nonumber
\end{align} 
which is a subspace of \(\mathcal{H}.\) Here \({\alpha}_k\in L_1(\mathbb{R}^d,\mathbb{\mathbb{R}}^{n\times m})\cap L_2(\mathbb{R}^d,\mathbb{\mathbb{R}}^{n\times m}),\) with the \(p^{th}\) row given by \(\alpha_k^p\).

\begin{lemma}[\(\mathcal{S}_k\perp M_{k-1}\)]\label{lemma_2}
Given the previous definitions of \(\mathcal{S}_k,M_{k-1},\) every element of \(\mathcal{S}_k\) is orthogonal to every element of \(M_{k-1}.\)
\end{lemma}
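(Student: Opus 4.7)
The plan is to fix arbitrary generators $\tilde{s}_k = \sum_{p=1}^m \int_{\mathbb{R}^d} \alpha_k^p(i) s_k(i)\, di \in \mathcal{S}_k$ and $m_{k-1} = \sum_{j=1}^{k-1} \int_{\mathbb{R}^d} \phi_j(i') z_j(i')\, di' \in M_{k-1}$, and verify that $\langle \tilde{s}_k, m_{k-1}\rangle = E[\tilde{s}_k m_{k-1}] = 0$. By bilinearity of the expectation it suffices to treat a single $(p,j)$ pair. Setting $m_j' \triangleq \int_{\mathbb{R}^d} \phi_j(i') z_j(i')\, di'$, the target reduces to
\[ E\Bigl[\Bigl(\int_{\mathbb{R}^d} \alpha_k^p(i) s_k(i)\, di\Bigr)\, m_j'\Bigr] = 0. \]
The intuition comes directly from \eqref{eq:orthospaces}: each scalar $s_k^q(i)$ is orthogonal to every element of $M_{k-1}$, so any weighted integral of $s_k$ against a deterministic kernel should stay orthogonal, provided we may push the expectation past the spatial integral.

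To make that interchange rigorous via Theorem \ref{Fub-Ton-Custom}, I first recast the innovation in the required ``$v + \gamma x$'' form. Because $\gamma(i)$ is deterministic, linearity of projection gives $\proj_{M_{k-1}}(\gamma(i) x_k) = \gamma(i)\hat{x}_{k|k-1}$, while the noise cross-correlations in \eqref{eq:obsnoisecov} imply $v_k(i) \perp M_{k-1}$, so $\proj_{M_{k-1}} v_k(i) = 0$. Hence $\hat{z}_k(i) = \gamma(i)\hat{x}_{k|k-1}$ and
\[ s_k(i) = \gamma(i)\bigl(x_k - \hat{x}_{k|k-1}\bigr) + v_k(i), \]
which matches the hypotheses of Theorem \ref{Fub-Ton-Custom} with noise field $v_k$, random vector $x_k - \hat{x}_{k|k-1}$, and kernel $\alpha_k^p$.

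A first application of Theorem \ref{Fub-Ton-Custom} (to the pair $\phi_j, z_j$) establishes $m_j' \in \mathcal{H}$ with finite second moment. A second application then permits swapping the expectation with the outer spatial integral to obtain
\[ E[\tilde{s}_k\, m_j'] = \int_{\mathbb{R}^d} \alpha_k^p(i)\, E\bigl[s_k(i)\, m_j'\bigr]\, di. \]
The integrand is dominated by $\|\alpha_k^p(i)\|_2 \sqrt{E[\|s_k(i)\|_2^2]\, E[(m_j')^2]}$, with $E[\|s_k(i)\|_2^2] \leq 2\|\gamma(i)\|_F^2\,\operatorname{tr}(P_{k|k-1}) + 2\operatorname{tr} R(0)$ uniformly bounded in $i$ by Assumptions \ref{ass:gamma} and \ref{ass:R}, so integrability follows from $\alpha_k^p \in L_1$. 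Finally, \eqref{eq:orthospaces} yields $E[s_k^q(i) m_j'] = 0$ componentwise for every $q$ and every $i$, so the integrand vanishes identically and the lemma follows.

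The main obstacle I anticipate is the two-stage Fubini interchange: $m_j'$ is itself a stochastic integral rather than a deterministic scalar, so one must first certify $m_j' \in \mathcal{H}$ before treating it as a fixed random variable when applying Theorem \ref{Fub-Ton-Custom} a second time to the outer integral. The recasting of $s_k$ as $\gamma(x_k - \hat{x}_{k|k-1}) + v_k$ is what bridges the gap between the innovation (which is not \emph{a priori} a w.s.s. field, since $\hat{x}_{k|k-1}$ introduces a non-stationary component) and the structural form required by Theorem \ref{Fub-Ton-Custom}; everything else reduces to routine integrability bounds supplied by the standing assumptions on $\gamma$, $R$, and the kernels generating $\mathcal{S}_k$ and $M_{k-1}$.
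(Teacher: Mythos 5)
Your proposal is correct and follows essentially the same route as the paper: recast the innovation as \(\gamma(i)(x_k-\hat{x}_{k|k-1})+v_k(i)\), invoke Theorem \ref{Fub-Ton-Custom} to interchange expectation and the spatial integral, and conclude from the pointwise orthogonality \eqref{eq:orthospaces} that the integrand vanishes. The extra care you take with the random factor \(m_j'\) (Cauchy--Schwarz domination and membership in \(\mathcal{H}\)) only makes explicit integrability details the paper leaves implicit.
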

\begin{proof}
    \(\alpha_k^p\in L_1(\mathbb{R}^d,\mathbb{R}^{1\times m})\) and \(s_k^p\) is the sum of a random field with bounded covariance function, and an absolutely integrable function. We may therefore apply Theorem \ref{Fub-Ton-Custom} and exchange integration and expectation operators.
\begin{align}
    \bigg\langle \tilde{s}_k, m_{k-1}\bigg\rangle_\mathcal{H} &=\bigg\langle \sum_{p=1}^m\int_{\mathbb{R}^d}\alpha_k^p(i)s_k(i)di, m_{k-1}\bigg\rangle_\mathcal{H}\nonumber\\ &=E\bigg[\left(\sum_{p=1}^m\int_{\mathbb{R}^d}\alpha_k^p(i)s_k(i)di\right)\cdot m_{k-1}\bigg]\nonumber\\
    &= \sum_{p=1}^m\int_{\mathbb{R}^d}\alpha_k^p(i)E[s_k(i)\cdot m_{k-1}]di\nonumber\\
    &=0\label{eq:ortho}
\end{align}
The final equality following from the orthogonality property stated in  \eqref{eq:orthospaces}.
\end{proof}
Having shown that the spaces \(\mathcal{S}\) and \(M_{k-1}\) are orthogonal, we will now find the projection of the true state \(x_k\) onto the space of all possible elements in \(\mathcal{H}\) that can be expressed by a bounded linear integral functional operating over all previous measurements \textit{and} the current measurement.
Let \(M^-_{k}\) be the set of all possible outputs of linear functionals of any element in \(\mathcal{Z}_k\), 
\begin{align}
    M^-_{k}\triangleq \bigg\{m^-_{k}\in\mathcal{H}:m^-_{k}=\int_{\mathbb{R}^d} \phi(i)z_k(i) di\bigg\},\label{eq:Mplusdef}&\\\text{ for some }\phi \in L_1(\mathbb{R}^d,\mathbb{\mathbb{R}}^{1\times m})\cap L_2(\mathbb{R}^d,\mathbb{\mathbb{R}}^{1\times m}), \ z_k\in \mathcal{Z}_k&.\nonumber
\end{align}
Note that using this definition, it immediately follows that \(M_{k-1}+M^-_{k}=M_{k}\), where the addition of sets is given by the Minkowski sum
\begin{align*}
    U+V=\{u+v|u\in U, v\in V\}.
\end{align*}
We also use the \(\oplus\) operator to refer to a direct sum of subspaces, where any element within the direct sum is uniquely represented by a sum of elements within the relevant subspaces \cite{axler}.
We wish to find the projection of \(x_k^l\) onto the space of \(M_{k-1}+ M^-_{k}\), which is given by
\begin{align}
    \hat{x}^l_k&=\proj_{M_{k-1}+M^-_{k}}x_k^l\nonumber\\
    &=\proj_{M_{k-1}\oplus \mathcal{S}_k}x_k^l\nonumber\\
    &=\proj_{M_{k-1}}x_k^l+\proj_{\mathcal{S}_k}x_k^l\nonumber\\
    &=\hat{x}_{k|k-1}^l+\beta_k^l,\quad\beta_k^l\triangleq\proj_{\mathcal{S}_k}x_k^l.\label{eq:projectionform}
    \end{align}
This sum represents a decomposition of the projected state component given measurements up to time \(k\) as a unique sum of a component within \(M_{k-1}\) and a component orthogonal to \(M_{k-1}.\)
Note that as \(\beta^l_k\in {\mathcal{S}_k}\), it can be formulated as \(\sum_{p=1}^m\int_{\mathbb{R}^d}\kappa_k^{l,p}(i)s_k^p(i)di\), each \(\kappa_k^{l,p}(i)\) being the \(l^{th}\) row, \(p^{th}\) column element of some \({\kappa}_k(i)\in \mathbb{R}^{n\times m}\) which must be determined. This matrix-valued function \({{\kappa}_k}(i)\) is equivalent to our optimal gain. We are now ready to prove Theorem \ref{TheoremOptimal}, which we will do by formulating necessary and sufficient conditions on the gain such that the optimal post-measurement correction term \(\beta_k=\int_{\mathbb{R}^d}\kappa_k(i)s_k(i)\,di\) is generated. The proof of Theorem \ref{TheoremOptimal} is as follows:

\begin{proof}
We will denote the \(l^{th}\) column of \({\kappa}^\top_k(i)\) as \({{\kappa}^l_k}(i)\in \mathbb{R}^{m}.\) By the Hilbert projection theorem, \((x_k^l-\beta^l_k) \perp {\mathcal{S}_k}\implies \langle x_k^l-\beta^l_k, \tilde{s}_k \rangle_\mathcal{H}=0\) for all \(\tilde{s}_k\in \mathcal{S}_k\), where \(\beta^l_k\) is guaranteed to be unique. We then have that \(\langle x_k^l, \tilde{s}_k \rangle_\mathcal{H} = \langle \beta^l_k, \tilde{s}_k \rangle_\mathcal{H}\) which, in conjunction with the fact that \(\langle \hat{x}_{k|k-1}^l, \tilde{s}_k \rangle_\mathcal{H} = 0\), implies
\begin{align}\label{eq:conditionproj}
    \langle x_k^l-\hat{x}_{k|k-1}^l, \tilde{s}_k \rangle_\mathcal{H} = \langle \beta^l_k, \tilde{s}_k \rangle_\mathcal{H}.
\end{align}
This is a necessary and sufficient condition for \(\beta_k^l\) to be the projection of our state component onto the space \(\mathcal{S}_k.\) We now must find the appropriate optimal gain function \(\kappa_k(i)\) such that \(\beta_k\), the corresponding vector generated, satisfies condition \eqref{eq:conditionproj} for all \(l.\) Expanding out the definitions of \(\beta_k^l\) and \(\tilde{s}_k\), interchanging integration and summation, and rearranging, leads to
\small
\begin{align}
        \bigg\langle\mkern-5mu x_k^l-\hat{x}_{k|k-1}^l-\int_{\mathbb{R}^d}\sum_{p=1}^m\kappa_k^{l,p}(i)s_k^p(i)di,\mkern-5mu\int_{\mathbb{R}^d}\sum_{p=1}^m\alpha_k^p(i)s^p_k(i)di\mkern-5mu\bigg\rangle_\mathcal{H}\mkern-14mu=\mkern-2mu0\nonumber
\end{align}
\normalsize
Note that \(\sum_{p=1}^m\kappa_k^{l,p}(i){s}_k^p(i)\) is simply the vector multiplication \({s}^\top_k(i){\kappa}^l_k(i).\) The same reasoning may be applied to show \(\sum_{p=1}^m\alpha_k^p(i)s^p_k(i)=s^\top_k(i)\alpha_k(i).\) We then have for all \(l\)
\small
\begin{align}
    0\mkern-3mu&=\mkern-5mu\bigg\langle\mkern-5mu x_k^l\mkern-4mu-\mkern-4mu\hat{x}_{k|k-1}^l\mkern-3mu-\mkern-6mu\int_{\mathbb{R}^d}\mkern-4mu\sum_{p=1}^m\mkern-2mu\kappa_k^{l,p}\mkern-2mu(i)s_k^p(i)di,\mkern-5mu\int_{\mathbb{R}^d}\mkern-3mu\sum_{p=1}^m\mkern-2mu\alpha_k^p(i')s^p_k(i')di'\mkern-5mu\bigg\rangle_\mathcal{\mkern-5muH}\nonumber\\
    &=\mkern-5mu\bigg\langle\mkern-5mu x_k^l\mkern-4mu-\mkern-4mu\hat{x}_{k|k-1}^l\mkern-3mu-\mkern-6mu\int_{\mathbb{R}^d}{s}^\top_k(i){\kappa}^l_k(i)di,\int_{\mathbb{R}^d}s^\top_k(i')\alpha_k(i')di'\bigg\rangle_\mathcal{H}\nonumber\\
    &=\mkern-3muE\bigg[\bigg(x_k^l-\hat{x}_{k|k-1}^l-\mkern-4mu\int_{\mathbb{R}^d}\mkern-7mu{s}_k(i)^\top{\kappa}^l_k(i)di\bigg)\mkern-4mu\int_{\mathbb{R}^d}\mkern-7mus^\top_k(i')\alpha_k(i') di'\bigg]\nonumber\\
    &=\mkern-7mu\int_\Theta\mkern-3mu\int_{\mathbb{R}^d}\mkern-7mu\bigg(\mkern-4mux_k^l\mkern-3mu-\mkern-3mu\hat{x}_{k|k-1}^l\mkern-3mu-\mkern-7mu\int_{\mathbb{R}^d}\mkern-9mu{s}_k(i)^{\mkern-4mu\top}{\kappa}^l_k(i)di\mkern-4mu\bigg)s^{\mkern-4mu\top}_k(i')\alpha_k(i') di'dP(\theta).\nonumber
    \end{align}
    \normalsize
 Note that as \(s_k\) is a random field with bounded covariance function, and \(\alpha_k\) is composed of absolutely integrable components, then we may employ Theorem \ref{Fub-Ton-Custom} and interchange expectation and integration operators. Therefore for all \(l,\)
 \small
 \begin{align}
     0\mkern-3mu&=\mkern-8mu\int_{\mathbb{R}^d}\mkern-3mu\int_\Theta\mkern-3mu\bigg(\mkern-3mux_k^l\mkern-3mu-\mkern-3mu\hat{x}_{k|k-1}^l\mkern-3mu-\mkern-4mu\int_{\mathbb{R}^d}\mkern-4mu{s}_k(i)^{\mkern-4mu\top}\mkern-4mu{\kappa}^l_k(i)di\bigg)s^{\top}_k(i')\alpha_k(i') dP(\theta)di'\nonumber\\
     &=\mkern-4mu\int_{\mathbb{R}^d}\mkern-4muE\mkern-2mu\left[\left(\mkern-4mux_k^l-\hat{x}_{k|k-1}^l-\int_{\mathbb{R}^d}{s}_k(i)^\top{\kappa}^l_k(i)di\right)s^\top_k(i')\right]\alpha_k(i')di'\nonumber
 \end{align}
\normalsize
Because \(\alpha_k\) is arbitrary, we must have that for all \(l\) and almost all \(i'\)

\begin{align}
    E\bigg[\bigg(x_k^l-\hat{x}_{k|k-1}^l-\int_{\mathbb{R}^d}{s}^\top_k(i){\kappa}_k^l(i)di\bigg){s}_k^\top(i')\bigg]={0}.\label{eq:constraint}
\end{align}

As the first argument is a scalar and the second a vector, the condition that this holds true for all \(l\) can be conveyed by a single matrix equality. Note that \(s^\top_k(i)\kappa_k^l(i)\) is equivalent to the \(l^{th}\) row of \(\kappa_k(i)s_k(i).\) By stacking the \(l\) row vectors in  \eqref{eq:constraint} to form an \(\mathbb{R}^{n\times m}\) matrix we may impose the matrix equality that for almost all \(i'\)
\begin{align}
    E\bigg[\bigg({x}_k-\hat{x}_{k|k-1}-\int_{\mathbb{R}^d}{\kappa}_k(i){s}_k(i)di\bigg){s}_k^\top(i')\bigg]={0}.\label{eq:matrixequality}
\end{align}

    Substituting in our value for \(s_k(i)\) into \eqref{eq:matrixequality} we find
    \small
    \begin{align}
    &E\bigg[\bigg(x_k-\hat{x}_{k|k-1}-\int_{\mathbb{R}^d}\kappa_k(i)\big[\gamma(i)(x_k-\hat{x}_{k|k-1})+v_k(i)\big]di\bigg)\nonumber\\
    &\times\bigg(\big[\gamma(i')(x_k-\hat{x}_{k|k-1})+v_k(i')\big]\bigg)^\top\bigg]\nonumber\\
    =&E\bigg[(x_k-\hat{x}_{k|k-1})(x_k-\hat{x}_{k|k-1})^\top\bigg]\gamma^\top(i')\nonumber\\
    &-\int_{\mathbb{R}^d}\bigg(\kappa_k(i)(\gamma(i)E\bigg[(x_k-\hat{x}_{k|k-1})(x_k-\hat{x}_{k|k-1})^\top\bigg]\gamma^\top(i')\nonumber\\&+E\bigg[v_k(i)v_k^\top(i')\bigg]\bigg)di=0\nonumber
    \end{align}
    \normalsize
    Let the posterior covariance matrix \(E[(x_k-\hat{x}_k)(x_k-\hat{x}_k)^\top]\) be denoted by \(P_k\), and let the prior covariance matrix \(E[(x_k-\hat{x}_{k|k-1})(x_k-\hat{x}_{k|k-1})^\top]\) be denoted by \(P_{k|k-1}.\)
    Substituting these terms for the corresponding expectations, we find the optimal gain function \(\kappa_k(i)\) must satisfy the condition

    \small
    \begin{align}
    P_{k|k-1}\gamma^\top(i')-\mkern-6mu\int_{\mathbb{R}^d}\mkern-4mu\kappa_k(i)\bigg(\gamma(i)P_{k|k-1}\gamma^\top(i')+R(i-i')\bigg)di=0.\nonumber
    \end{align}
    \normalsize

    Rearranging these terms we arrive at the necessary and sufficient condition 
    \small
    \begin{align}\label{eq:necccondition}
\int_{\mathbb{R}^d}\kappa_k(i)\bigg(\gamma(i)P_{k|k-1}\gamma^\top(i')+R(i-i')\bigg)di=P_{k|k-1}\gamma^\top(i').
\end{align}
\normalsize
\end{proof}

An alternative derivation is given in \cite{VarleyACC} based on the Calculus of Variations approach. The derivation above, which uses the Hilbert Projection Theorem, has the advantage of a guaranteed uniqueness of the correction term \(\int_{\mathbb{R}^d}\kappa_k(i)s_k(i)di\) and state estimate \(\hat{x}_k\), which follows immediately from Theorem \ref{HilbertProjTheorem}. This approach avoids the use of the functional derivative which has attracted some criticism \cite{Omatu}, \cite{Curtain}, as well as closely mirroring the methodology of the original Kalman proof \cite{KalmanOriginal}. 

Recall that we initially assumed \(M_{k-1}\) is a closed subspace. We will now show that an explicit closed-form solution of the optimal gain function can be derived via \(d\)-dimensional Fourier based methods, thanks to the w.s.s. assumption of the measurement noise. We will then show that when a unique gain function which satisfies \eqref{eq:necccondition} exists, it is indeed the case that \(M_{k-1}\) is closed, which is derived in Corollary \ref{corollary:closedM}.

\subsection{An Explicit Solution for the Optimal Gain Function}\label{subsec:ExplicitSol}
We may now proceed to find an explicit value for the optimal gain function that satisfies \eqref{eq:ACCcondition}. Consider the matrix-valued functional
\begin{align}
    F(\kappa_k)\triangleq\bigg(I-\int_{\mathbb{R}^d}\kappa_k(i)\gamma(i)di\bigg)P_{k|k-1}\in \mathbb{R}^{n\times n}\label{eq:functional}
\end{align}
and note that,
\begin{align}
    \int_{\mathbb{R}^d}\kappa_k(i)R(i'-i)di&=F(\kappa_k)\gamma^\top(i')\label{eq:preconv}
\end{align}

The left hand side of  \eqref{eq:preconv} is simply a multi-dimensional convolution. Many useful properties of the scalar Fourier transform carry over to the multi-dimensional Fourier transform. We will make use of the convolution property \cite{Bracewell2003},
\begin{align}
    &\mathcal{F}\left\{\int_{\mathbb{R}^d}f(i_1-y_1,...,i_d-y_d)g(y_1,...,y_d)\right\}dy_1...dy_d\nonumber\\=&\mathcal{F}\{(f\ast g)(i_1,i_2,...,i_d)\}=(\mathcal{F}\{f\})(\omega)(\mathcal{F}\{g\})(\omega).\label{eq:fourierconv}
\end{align}
This enables us to find that \eqref{eq:preconv} is equivalent to
\begin{align}
    (\kappa_k\ast R)(i') &= F(\kappa_k)\gamma^\top(i')\label{eq:postconv}
\end{align}
Note that \(\omega\) will be a vector of the same dimension as \(i\). We now require three assumptions to ensure existence and invertibility of the Fourier transform of \(R\), as well as the existence of the Fourier transform of \(\gamma^\top\):
\assumgamma*
\assumR*
\assumbarR*
Applying \eqref{eq:postconv} and \eqref{eq:fourierconv} to  \eqref{eq:preconv} implies
\begin{align}
     \Bar{\kappa}_k(\omega)\Bar{R}(\omega)&=F(\kappa_k)\Bar{\gamma}^\top(\omega)\nonumber\\
     \Bar{\kappa}_k(\omega)&=F(\kappa_k)\Bar{\gamma}^\top(\omega)\Bar{R}(\omega)^{-1}\nonumber\\
     \kappa_k(i)&=F(\kappa_k)f(i), \ f(i)=\mathcal{F}^{-1}\{\Bar{\gamma}^\top(\omega)\Bar{R}(\omega)^{-1}\}\label{eq:kappadef}.
\end{align}
For this method to be applicable we must also make the following assumption 
\assumRGamma*
Combining \eqref{eq:functional} and \eqref{eq:kappadef} yields
\begin{align}
    F(\kappa_k)&=\bigg(I-F(\kappa_k)\int_{\mathbb{R}^d}f(i)\gamma(i) di\bigg)P_{k|k-1}\nonumber\\
    \implies F(\kappa_k)&=P_{k|k-1}\bigg(I+\int_{\mathbb{R}^d}f(i)\gamma(i)diP_{k|k-1}\bigg)^{-1}\label{eq:functionalsolved}\\
    \implies \kappa_k(i)&=P_{k|k-1}\bigg(I+\int_{\mathbb{R}^d}f(i)\gamma(i)diP_{k|k-1}\bigg)^{-1}f(i)\nonumber
\end{align}
Note that this directly implies that for all \(k\in \mathbb{N}\), \(\kappa_k\in L_1(\mathbb{R}^d,\mathbb{R}^{n\times m})\cap L_2(\mathbb{R}^d,\mathbb{R}^{n\times m})\).
For simplicity of notation, recall from \eqref{def:S} that \(S=\int_{\mathbb{R}^d}f(i)\gamma(i)di\in \mathbb{R}^{n\times n}.\) Having thus found the optimal \(\kappa_k(i)\) function which generates \(\beta_k\), we now modify \eqref{eq:projectionform} to find for any state component \(l\),
\begin{align}
    &\proj_{M_{k-1}+M^-_{k}}x_k^l\nonumber\\
    &=\hat{x}_{k|k-1}^l+\sum_{p=1}^m\int_{\mathbb{R}^d}\big[P_{k|k-1}(I+SP_{k|k-1})^{-1}f(i)\big]^{l,p}s^p_k(i)di.\nonumber
\end{align}
These \(l\) equations may be stacked up into a single matrix equation of the form
\begin{align}
    &\proj_{M_{k-1}+M^-_{k}}x_k\nonumber\\
    &=\hat{x}_{k|k-1}+\int_{\mathbb{R}^d}P_{k|k-1}(I+SP_{k|k-1})^{-1}f(i)s_k(i)di,\label{eq:betterprojectionform}
\end{align}
where the projection operator acting on the state vector is understood to represent the vector of each components projections. We have thus far assumed that \(\hat{x}_{k|k-1}\) is known to us, we will complete the chain now by determining this value explicitly as
\begin{align}
    \hat{x}_{k|k-1}&=\proj_{M_{k-1}}x_k\nonumber\\
    &=\proj_{M_{k-1}}Ax_{k-1}+\proj_{M_{k-1}}w_{k-1}\nonumber\\
    &=A\proj_{M_{k-1}}x_{k-1}\nonumber\\
    &=A\proj_{M_{k-2}+M^+_{k-2}}x_{k-1}\nonumber\\
    &=A\hat{x}_{k-1}\label{eq:stateproj}
\end{align}
The form of \eqref{eq:betterprojectionform} with this substitution is then
\begin{align}
    \hat{x}_k=&\proj_{M_{k-1}+M^-_{k}}x_k\nonumber\\
    =&A{\hat{x}_{k-1}}+P_{k|k-1}(I+SP_{k|k-1})^{-1}\int_{\mathbb{R}^d}f(i)s_k(i)di\nonumber\\
    =&A{\hat{x}_{k-1}}+\nonumber\\&P_{k|k-1}(I+SP_{k|k-1})^{-1}\int_{\mathbb{R}^d}\mkern-5muf(i)\big(z_k(i)-\gamma(i)\hat{x}_{k|k-1}\big)di,\nonumber
\end{align}
where in the last line we substitute \(s_k(i)=z_k(i)-\hat{z}_k(i)\), noting that 
\begin{align}
    \hat{z}_k(i)&=\proj_{M_{k-1}}z_k(i)\nonumber\\
    &=\proj_{M_{k-1}}\big(\gamma(i)x_k+v_k(i)\big)\nonumber\\
    &=\gamma(i)\proj_{M_{k-1}}x_k\nonumber\\
    &=\gamma(i)\hat{x}_{k|k-1}.\nonumber
\end{align}

Finally, we will show that the space \(M_{k-1}\) is indeed closed, which was previously assumed in Section \ref{subsec:OptimalityConditions}. Hence it admits a unique projection by the Hilbert projection theorem (Thm. \ref{HilbertProjTheorem}), implying the uniqueness of the optimal filter derived above.

\begin{corollary}\label{corollary:closedM}
    Given the system defined in Section \ref{sec:ModelDefinition}, and conditions given in Section \ref{subsec:OptimalityConditions}, the space \(M_{k-1}\) is a closed subspace of \(\mathcal{H}\).
\end{corollary}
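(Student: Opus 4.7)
The plan is to establish closure by induction on $k$, leveraging the orthogonal decomposition that is central to the derivation of Theorem \ref{TheoremOptimal}. The base case $M_0=\{0\}$ is trivially closed. For the inductive step, suppose $M_{k-2}$ is closed, so that by Theorem \ref{HilbertProjTheorem} the projection $\hat{z}_{k-1}$, and hence the innovation field $s_{k-1}$ and the subspace $\mathcal{S}_{k-1}$, are well-defined. A direct check using $z_{k-1}=\hat{z}_{k-1}+s_{k-1}$ and $\hat{z}_{k-1}(i)=\gamma(i)\hat{x}_{k-1|k-2}$ shows $M_{k-1}=M_{k-2}+M_{k-1}^{-}=M_{k-2}\oplus\mathcal{S}_{k-1}$, where the second equality is an orthogonal direct sum thanks to Lemma \ref{lemma_2}. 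Since the orthogonal direct sum of two closed subspaces of a Hilbert space is closed, it then remains only to show that $\mathcal{S}_{k-1}$ is closed.

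To prove $\mathcal{S}_{k-1}$ is closed, I would invoke the criterion that a subspace $V\subseteq\mathcal{H}$ is closed if and only if, for every $y\in\mathcal{H}$, the infimum $\inf_{v\in V}\|y-v\|_\mathcal{H}$ is attained in $V$. Repeating the orthogonality argument from the proof of Theorem \ref{TheoremOptimal}, but with an arbitrary $y\in\mathcal{H}$ in place of $x_k^l-\hat{x}_{k|k-1}^l$, shows that the minimizer in $\mathcal{S}_{k-1}$ exists provided the integral equation analogous to \eqref{eq:ACCcondition}, with right-hand side $C(i')\triangleq E[y\,s_{k-1}^{\top}(i')]$ replacing $P_{k|k-1}\gamma^{\top}(i')$, admits a solution $\kappa\in L_1\cap L_2$. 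The Fourier construction of Section \ref{subsec:ExplicitSol} carries over almost verbatim to produce such a $\kappa$, exhibiting the desired minimizer as an honest element of $\mathcal{S}_{k-1}$ rather than merely an element of its closure.

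The main obstacle will be confirming that the generalized right-hand side $C(i')$ is regular enough for the Fourier inversion step to land in $L_1\cap L_2$, since the original derivation only ever needed this for the specific right-hand side $P_{k|k-1}\gamma^{\top}(i')$. The cleanest way to handle this is to exploit the decomposition $s_{k-1}(i')=\gamma(i')(x_{k-1}-\hat{x}_{k-1|k-2})+v_{k-1}(i')$, which splits $C(i')$ into a $\gamma^{\top}$ term, whose Fourier regularity is governed by Assumptions \ref{ass:gamma}--\ref{ass:RGamma} as in the original derivation, plus a cross-covariance term $E[y\,v_{k-1}^{\top}(i')]$ whose Fourier regularity is inherited from that of $R$ (Assumption \ref{ass:R}) via the bounded w.s.s.\ structure of $v_{k-1}$ and the fact that $y$ has finite second moment. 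Once this regularity is in hand, the Fourier argument of Section \ref{subsec:ExplicitSol} extends to give the required $\kappa\in L_1\cap L_2$, closing the induction and establishing that $M_{k-1}$ is closed.
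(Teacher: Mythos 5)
Your overall route is the same as the paper's: write \(M_{k}\) as the orthogonal direct sum \(M_{0}\oplus\mathcal{S}_{1}\oplus\cdots\oplus\mathcal{S}_{k}\) (orthogonality from Lemma \ref{lemma_2}), argue that each innovation subspace \(\mathcal{S}_{j}\) is closed because best approximations from it are attained, and then use the fact that an orthogonal direct sum of closed subspaces is closed. The paper phrases the middle step via a Cauchy sequence in \(\mathcal{S}_{k}\) whose \(\mathcal{H}\)-limit must coincide with its attained minimizer, which is exactly your ``closed if every distance is attained'' criterion, so there is no real difference of method. The difference is that the paper simply \emph{asserts} that an arbitrary \(\bar{s}\in\mathcal{H}\) admits an attained minimizer in \(\mathcal{S}_{k}\) (having proved attainment only for \(x_{k}^{l}\), whose cross-covariance with the innovation is the specific right-hand side \(P_{k|k-1}\gamma^{\top}(i')\) of \eqref{eq:ACCcondition}), whereas you correctly identify that this extension to general \(y\in\mathcal{H}\) is the crux and try to prove it.

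That is also where your argument has a genuine gap: the claim that the cross term \(c(i')\triangleq E[y\,v_{k-1}^{\top}(i')]\) has enough Fourier regularity, ``inherited from \(R\)'' and from \(E[y^{2}]<\infty\), for the construction of Section \ref{subsec:ExplicitSol} to return a kernel in \(L_{1}\cap L_{2}\). Cauchy--Schwarz and boundedness of \(R\) only give that \(c\) is bounded (and, with a little more work, square-integrable); what the Fourier inversion actually requires is good behaviour of \(\bar{c}\,\bar{R}^{-1}\), and no assumption covers this --- Assumption \ref{ass:RGamma} controls only the particular product \(\bar{\gamma}^{\top}\bar{R}^{-1}\). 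Since \(R\in L_{1}\) forces \(\bar{R}(\omega)\to 0\) as \(\|\omega\|\to\infty\) (Riemann--Lebesgue), dividing by \(\bar{R}\) can destroy integrability: e.g.\ with \(m=d=1\) and \(\bar{R}(\omega)=e^{-\omega^{2}}\), let \(y\) be the \(\mathcal{H}\)-limit of \(\int\alpha_{n}(i)s_{k-1}(i)\,di\) with \(\bar{\alpha}_{n}\) smooth truncations of \(e^{\omega^{2}/4}\); the sequence is Cauchy because \(\int e^{\omega^{2}/2}e^{-\omega^{2}}\,d\omega<\infty\), yet the limiting cross-covariance satisfies \(\bar{c}\,\bar{R}^{-1}=e^{\omega^{2}/4}\notin L_{2}\), so no kernel in \(L_{1}\cap L_{2}\) solves your generalized integral equation and the minimum is not attained. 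In other words, the elements of \(\mathcal{H}\) for which your extended Fourier argument breaks down are precisely the candidate limit points whose membership in \(\mathcal{S}_{k-1}\) is in question, so asserting the needed regularity at this point begs the question; as written, the proposed fix does not close the step (and, to be fair, the paper's own argument does not close it either --- it leaves the attainment claim for arbitrary \(\bar{s}\) unproved). Closing it would require either enlarging the admissible kernel class (working with the \(\mathcal{H}\)-closure of \(\mathcal{S}_{k-1}\)) or an additional assumption bounding \(\bar{R}\) away from zero on the relevant frequency set.
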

\begin{proof}
    The optimal gain function given by \eqref{eq:kappadef} represents a kernel defining an operator \(K_k:\mathcal{H}\rightarrow \mathcal{S}_k\) as presented in \eqref{eq:kappaform}. This operator generates a unique \(\beta_k=K_kx_k\) such that for each element \(l\), \((x_k^l-\beta_k^l)\perp \mathcal{S}_k\). This unique \(\beta_k\) is in fact a minimizer as for all \(\tilde{s}_k\in \mathcal{S}_k\)
    \begin{align}
        \|\tilde{s}_k-x_k^l\|_\mathcal{H}^2=&\|\beta_k^l-x_k^l+\tilde{s}_k-\beta_k^l\|_\mathcal{H}^2\nonumber\\
        =&\|\beta_k^l-x_k^l\|_\mathcal{H}^2+\|\tilde{s}_k-\beta_k^l\|_\mathcal{H}^2\nonumber\\
        &+2\langle \beta_k^l-x_k^l, \tilde{s}_k-\beta_k^l\rangle_\mathcal{H}\nonumber\\
        =&\|\beta_k^l-x_k^l\|_\mathcal{H}^2+\|\tilde{s}_k-\beta_k^l\|_\mathcal{H}^2\nonumber\\
        \implies \|x_k^l-\tilde{s}_k\|_\mathcal{H}^2 \geq& \|x_k^l-\beta_k^l\|_\mathcal{H}^2,\nonumber
    \end{align}
    with equality iff \(\tilde{s}_k=\beta_k^l\).
\end{proof}
Let us now examine an arbitrary Cauchy sequence \(\{s_1,s_2,...,s_n\}\) with each element in \(\mathcal{S}_k\). We also denote \(\lim_{n\rightarrow \infty}s_n=\bar{s}\). As \(\mathcal{S}_k\) is a subspace of a closed space \(\mathcal{H}\), it is guaranteed that \(\bar{s}\in\mathcal{H}\) exists. For any \(\bar{s}\in\mathcal{H}\) there exists a unique minimizer \(\beta^l\in \mathcal{S}_k\) such that
\begin{align}
    \|\beta^l-\bar{s}\|_\mathcal{H}^2 &\leq \|s-\bar{s}\|_\mathcal{H}^2 \quad \forall s\in \mathcal{S}_k\nonumber\\
     \|\beta^l-\bar{s}\|_\mathcal{H}^2& \leq \|s_n-\bar{s}\|_\mathcal{H}^2 \quad \forall n.\nonumber
\end{align}
As the upper bound goes to zero as \(n\) increases, and the inequality holds for all \(n\), \(\bar{s}\) must equal \(\beta^l\) and hence also reside in the space \(\mathcal{S}_k\). As the Cauchy sequence was arbitrary, \(\mathcal{S}_k\) contains all its limit points and is hence closed. It is well known that if two closed linear subspaces of a Hilbert space are orthogonal, then the direct sum of these subspaces is also closed \cite[p. 38]{conway}. This result, combined with the fact that \(M_{k}=M_{0}\oplus \mathcal{S}_1\oplus\mathcal{S}_2...\oplus \mathcal{S}_k\), implies that if \(M_0\) is a closed space, \(M_{k}\) is a closed space for any \(k\).

\subsection{The Optimal Filter}\label{subsec:OptimalFilter}
We now have sufficient knowledge to state a key theorem.

\begin{theorem}\label{MainResult}
For the system defined in Section \ref{sec:ModelDefinition}, at each time step \(k\), the optimal filter gain \(\kappa_k(i)\) and associated covariance matrices are given by the equations.
    \begin{align}
        \kappa_k(i)&=P_kf(i),\nonumber\\
        P_{k|k-1}&=AP_{k-1}A^\top+Q,\nonumber\\
        P_k&=P_{k|k-1}\bigg(I+SP_{k|k-1}\bigg)^{-1},\nonumber
    \end{align}
    where \(f(i)=\mathcal{F}^{-1}\{\Bar{\gamma}^\top(\omega)\Bar{R}(\omega)^{-1}\}, \quad S=\int_{\mathbb{R}^d}f(i)\gamma(i)di\).
\end{theorem}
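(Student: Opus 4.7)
My plan is to assemble \ref{MainResult} from results already established in Section \ref{subsec:OptimalityConditions} and Section \ref{subsec:ExplicitSol}, supplemented by two short covariance computations. The explicit form of the gain was essentially obtained just before \eqref{eq:functionalsolved}: the derivation there yields $\kappa_k(i) = P_{k|k-1}(I + SP_{k|k-1})^{-1}f(i)$. Thus the gain identity $\kappa_k(i)=P_k f(i)$ reduces to verifying that the bracketed matrix is precisely the posterior covariance $P_k$, and the remaining equation $P_{k|k-1}=AP_{k-1}A^\top+Q$ is a standard prior-update computation adapted to our setting.

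For $P_k = P_{k|k-1}(I+SP_{k|k-1})^{-1}$, I would exploit orthogonality. Since $\hat{x}_k = \mathrm{proj}_{M_{k-1}\oplus \mathcal{S}_k}x_k$ componentwise, each component of $x_k-\hat{x}_k$ is orthogonal in $\mathcal{H}$ to each component of $\hat{x}_k$, so $E[(x_k-\hat{x}_k)\hat{x}_k^\top]=0$ and consequently $P_k = E[(x_k-\hat{x}_k)x_k^\top]$. Substituting the filter equation together with $s_k(i)=\gamma(i)(x_k-\hat{x}_{k|k-1})+v_k(i)$ gives
\begin{align*}
x_k-\hat{x}_k = \Bigl(I-\int_{\mathbb{R}^d}\kappa_k(i)\gamma(i)\,di\Bigr)(x_k-\hat{x}_{k|k-1}) - \int_{\mathbb{R}^d}\kappa_k(i)v_k(i)\,di.
\end{align*}
Taking expectation against $x_k^\top$, using $E[v_k(i)x_k^\top]=0$ (since $v_k$ is uncorrelated with $w_j$ for all $j$), and $E[(x_k-\hat{x}_{k|k-1})\hat{x}_{k|k-1}^\top]=0$ (prior orthogonality), I obtain $P_k = (I-\int_{\mathbb{R}^d}\kappa_k(i)\gamma(i)\,di)P_{k|k-1} = F(\kappa_k)$, which by \eqref{eq:functionalsolved} equals $P_{k|k-1}(I+SP_{k|k-1})^{-1}$. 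Rewriting the gain then gives $\kappa_k(i)=P_k f(i)$.

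For the prior recursion, I would start from \eqref{eq:stateproj}, which gives $\hat{x}_{k|k-1}=A\hat{x}_{k-1}$, and combine it with \eqref{eq:statedynamics} to get $x_k-\hat{x}_{k|k-1}=A(x_{k-1}-\hat{x}_{k-1})+w_{k-1}$. Since $\hat{x}_{k-1}\in M_{k-1}$ is a linear functional of measurements $z_1,\dots,z_{k-1}$, and those measurements depend only on $w_0,\dots,w_{k-2}$ and on $v_1,\dots,v_{k-1}$, the noise $w_{k-1}$ is uncorrelated with both $x_{k-1}$ and $\hat{x}_{k-1}$. The cross terms in the outer-product expectation therefore vanish, leaving $P_{k|k-1}=AP_{k-1}A^\top+Q$.

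The main subtlety is not conceptual but one of bookkeeping: every time I exchange integration over $\mathbb{R}^d$ with expectation (e.g., to extract $\int_{\mathbb{R}^d}\kappa_k(i)\gamma(i)\,di \cdot P_{k|k-1}$ from an expression involving the random field $v_k$), I need to check that the hypotheses of Theorem \ref{Fub-Ton-Custom} hold. Under Assumptions \ref{ass:gamma}--\ref{ass:kappa} the relevant $\kappa_k$ is in $L_1\cap L_2$ and $v_k$ is a w.s.s.\ field with bounded covariance, so the exchange is legal; once these technicalities are noted, the three displayed equations follow by direct substitution.
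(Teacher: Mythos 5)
Your proposal is correct and follows the paper's proof in its essentials: the gain is taken from \eqref{eq:kappadef}, the posterior formula from \eqref{eq:functionalsolved}, and the prior update $P_{k|k-1}=AP_{k-1}A^\top+Q$ is the same direct computation. The one place you genuinely deviate is in establishing $P_k=\bigl(I-\int_{\mathbb{R}^d}\kappa_k(i)\gamma(i)\,di\bigr)P_{k|k-1}$, i.e. \eqref{covupdateK}: the paper expands $E[e_ke_k^\top]$ into five terms, including the double integral $\int\int\kappa_k(i)R(i-i')\kappa_k^\top(i')\,di\,di'$, and then substitutes the optimality condition \eqref{eq:ACCcondition} to cancel all but the first two; you instead use the orthogonality principle — since $\hat{x}_k$ is the componentwise projection onto $M_{k-1}\oplus\mathcal{S}_k$, Theorem \ref{HilbertProjTheorem} gives $E[(x_k-\hat{x}_k)\hat{x}_k^\top]=0$, hence $P_k=E[(x_k-\hat{x}_k)x_k^\top]$, and the error recursion plus uncorrelatedness collapses this to $F(\kappa_k)$ as defined in \eqref{eq:functional}. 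The two routes are equivalent in substance (both rest on optimality of the projection), but yours is marginally tidier since the quadratic noise term never appears, while the paper's makes explicit that the cancellation is exactly the content of \eqref{eq:ACCcondition}. Two bookkeeping remarks: your claims $E[v_k(i)x_k^\top]=0$ and $E[w_{k-1}\hat{x}_{k-1}^\top]=0$ require, beyond $E[v_k(i)w_j^\top]=0$ and the whiteness of $w$, that the noises are uncorrelated with the initial state $x_0$ — an assumption the paper also uses implicitly; and, as you correctly note, each interchange of expectation with integration over $\mathbb{R}^d$ is licensed by Theorem \ref{Fub-Ton-Custom} under Assumptions \ref{ass:gamma}--\ref{ass:kappa}.
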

\begin{proof}
The optimal gain function \(\kappa_k\) has been derived in Section \ref{subsec:ExplicitSol} and is given in \eqref{eq:kappadef}. It remains to derive the covariance matrix update equations. First, we connect the prior covariance matrix to the covariance matrix of the previous time step, via
\begin{align}\label{eq:priorcov}
    P_{k|k-1}&=E[(x_k-\hat{x}_{k|k-1})(x_k-\hat{x}_{k|k-1})^\top]\nonumber\\
    &=AE[(x_{k-1}-\hat{x}_{k-1})(x_{k-1}-\hat{x}_{k-1})^\top]A^\top+Q\nonumber\\
    &=AP_{k-1}A^\top+Q.
\end{align}
Then we determine the relation between \(P_k\) and \(P_{k|k-1}\) by first observing that expansion and substitution yields the equation
\small
\begin{align}
    P_k=E[e_ke_k^\top]\mkern-14mu&\nonumber\\
    =P_{k|k-1}&-\int_{\mathbb{R}^d}\mkern-7mu\kappa_k(i)\gamma(i)diP_{k|k-1}-P_{k|k-1}\int_{\mathbb{R}^d}\mkern-7mu\gamma^\top(i)\kappa_k^\top(i)di\nonumber\\&+\int_{\mathbb{R}^d}\kappa_k(i)\gamma(i)diP_{k|k-1}\int_{\mathbb{R}^d}\gamma^\top(i)\kappa_k^\top(i)di\nonumber\\
    &+\int_{\mathbb{R}^d}\int_{\mathbb{R}^d}\kappa_k(i)R(i-i')\kappa_k(i')^\top\,di\,di'.\nonumber
\end{align}
\normalsize
By substitution of \eqref{eq:ACCcondition}, all but the first two terms cancel to give us
\begin{align}\label{covupdateK}
    P_k=\bigg(I-\int_{\mathbb{R}^d}\kappa_k(i)\gamma(i)di\bigg)P_{k|k-1}
\end{align}
But this is precisely the equation for \(F(\kappa_k)\) defined in \eqref{eq:functional}, so from \eqref{eq:functionalsolved} we simply have
\begin{align}\label{eq:cov}
    P_k=F(\kappa_k)=P_{k|k-1}\bigg(I+\int_{\mathbb{R}^d}f(i)\gamma(i)\,diP_{k|k-1}\bigg)^{-1}
\end{align}
\end{proof}

\section{Optimal Linear Filter Stability}\label{sec:Stability}
We will now show that the derived filter provided by Theorem \ref{MainResult} is mean square stable under assumptions A1-A7. To do so we will establish via Lemma \ref{lemma:convergence} that the associated covariance matrices obey a discrete-time algebraic Riccati equation. We will then, in Theorem \ref{theorem:convergence}, show that, so long as the initial covariance matrix is non-negative and symmetric, there exists a unique asymptotic \textit{a priori} covariance matrix that satisfies this Riccati equation.

\begin{lemma}
    \label{lemma:convergence}
    Consider the filter equations provided by Theorem \ref{MainResult} for the system defined in Section \ref{sec:ModelDefinition}.
    The matrix \(S\) is positive semi-definite, it is also finite under assumptions \ref{ass:gamma} and \ref{ass:RGamma}.
    The matrix \(S\) admits a unique square root decomposition \(S=GG\) for some symmetric positive semi-definite \(G\in\mathbb{R}^{n\times n}\).
    Furthermore, for $k \geq 1$, the \emph{a priori} covariance matrices \(P_{k+1 | k}\) satisfy the discrete-time algebraic Riccati equation
    \small
    \begin{align}\notag
        P_{k+1 | k}=& A  P_{k | k-1} A^\top \\\label{eq:riccati} 
        - &A P_{k | k-1}G\left(I+G^\top P_{k | k-1}G\right)^{-1}G^\top P_{k | k-1}A^\top + Q. 
    \end{align}
    \normalsize
\end{lemma}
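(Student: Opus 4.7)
My plan is to establish the three structural properties of the matrix \(S\) in turn and then obtain the Riccati recursion by substituting the closed-form covariance updates from Theorem \ref{MainResult} into one another. Finiteness of \(S\) is the easiest step: under \ref{ass:gamma} the function \(\gamma\) is essentially bounded, and under \ref{ass:RGamma} the kernel \(f\) defined in \eqref{def:S} lies in \(L_1\), so sub-multiplicativity of the Frobenius norm gives
\begin{align*}
    \|S\|_F \;\leq\; \int_{\mathbb{R}^d} \|f(i)\|_F \|\gamma(i)\|_F \, di \;\leq\; \|\gamma\|_\infty\, \|f\|_{L_1} \;<\; \infty.
\end{align*}

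For symmetry and positive semi-definiteness I would move to the frequency domain by applying Parseval's theorem componentwise. Since \(\gamma\) and \(R\) are real-valued we have the conjugate symmetries \(\bar\gamma(-\omega) = \overline{\bar\gamma(\omega)}\) and \(\bar R(-\omega) = \overline{\bar R(\omega)}\), and under \ref{ass:gamma}--\ref{ass:RGamma} a direct calculation yields
\begin{align*}
    S \;=\; \int_{\mathbb{R}^d} \bar\gamma^\top(\omega)\, \bar R^{-1}(\omega)\, \overline{\bar\gamma(\omega)} \, d\omega.
\end{align*}
The matrix-valued Bochner theorem, applied to the bounded integrable covariance \(R\), implies that \(\bar R(\omega)\) is Hermitian positive semi-definite for almost every \(\omega\); hence so is \(\bar R^{-1}(\omega)\) wherever it is defined (which is almost everywhere by \ref{ass:barR}). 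For any real \(v \in \mathbb{R}^n\), setting \(u(\omega) := \overline{\bar\gamma(\omega)}\,v\), the integrand becomes \(u(\omega)^H \bar R^{-1}(\omega) u(\omega) \geq 0\), proving \(v^\top S v \geq 0\). The matrix \(S\) is real by construction and Hermitian by the same frequency-domain representation together with \((\bar R^{-1})^H = \bar R^{-1}\), hence it is symmetric PSD. The existence and uniqueness of a symmetric PSD principal square root \(G\) with \(S = GG\) is then the standard matrix-theoretic result already cited after \eqref{def:S}.

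For the Riccati recursion I would combine the two updates in Theorem \ref{MainResult}. Using \(S = GG = GG^\top\) (crucially, \(G = G^\top\)), the Woodbury matrix identity (applied with \(U = G\), \(V = G P_{k|k-1}\)) gives
\begin{align*}
    P_{k|k-1}\bigl(I + S P_{k|k-1}\bigr)^{-1} \;=\; P_{k|k-1} - P_{k|k-1} G\bigl(I + G^\top P_{k|k-1} G\bigr)^{-1} G^\top P_{k|k-1},
\end{align*}
which is easily verified by right-multiplication by \((I + S P_{k|k-1})\) and the factorization \(G^\top P_{k|k-1} + G^\top P_{k|k-1} S P_{k|k-1} = (I + G^\top P_{k|k-1} G) G^\top P_{k|k-1}\). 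Substituting this closed form for \(P_k\) into \(P_{k+1|k} = A P_k A^\top + Q\) produces \eqref{eq:riccati} directly.

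The main obstacle is the symmetric PSD claim for \(S\): it requires justifying the componentwise use of Parseval's theorem (both time- and frequency-domain integrability are supplied by \ref{ass:gamma}--\ref{ass:RGamma}), invoking the matrix-valued Bochner theorem in a form consistent with the hypotheses on \(R\), and tracking the conjugate symmetries of the Fourier transforms of the real matrix-valued \(\gamma\) and \(R\). Once this is in place, the Riccati step is routine linear algebra, and the existence of \(G\) is a standard off-the-shelf fact.
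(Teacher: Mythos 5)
Your proposal is correct and follows essentially the same route as the paper: finiteness of \(S\) from \ref{ass:gamma} and \ref{ass:RGamma}, a frequency-domain representation of \(S\) showing it is symmetric positive semi-definite because \(\bar{R}(\omega)^{-1}\) is Hermitian positive definite almost everywhere, the standard unique principal square root \(G=S^{\frac{1}{2}}\), and the Woodbury identity applied to \(P_k=P_{k|k-1}(I+SP_{k|k-1})^{-1}\) to reach \eqref{eq:riccati}. The only differences are cosmetic: you cite the matrix-valued Bochner theorem and argue positive semi-definiteness of \(S\) directly through the quadratic form \(u(\omega)^{H}\bar{R}^{-1}(\omega)u(\omega)\), whereas the paper derives the positive semi-definiteness of \(\bar{R}(\omega)\) via the Wiener--Khintchine relation for the scalar field \(v^\top u\), justifies the passage to the frequency domain by an explicit Fubini--Tonelli interchange rather than Parseval, and uses a Cholesky factorization of \(\bar{R}^{-1}\) for the final quadratic-form step.
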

\begin{proof}
    First we find that \(\bar{R}(\omega)\) is Hermitian as
    \begin{align}
        \bar{R}(\omega)^* =&\int_{\mathbb{R}^d}e^{2\pi j\omega \cdot i}R(i)^\top\,di\nonumber\\
        =&\int_{\mathbb{R}^d}e^{2\pi j\omega \cdot i}R(-i)\,di\nonumber\\
        =&\int_{\mathbb{R}^d}e^{-2\pi j\omega \cdot i}R(i)di=\bar{R}(\omega),\nonumber
    \end{align}
    where \((\cdot)^*\) denotes the complex conjugate transpose. \(\bar{R}(\omega)\) is also positive semi-definite, as for almost all \(\omega\) and all \(u\in \mathbb{C}^m\),
    \begin{align}
        u^*\bar{R}(w)u=&\int_{\mathbb{R}^d}e^{-2\pi ji\cdot \omega}E[u^*v(\ell+i)v(\ell)^\top u]\,di, \ \forall \ell \in \mathbb{R}^d\nonumber\\
        =&\int_{\mathbb{R}^d}e^{-2\pi ji\cdot \omega}E\left[\left(v(\ell)^\top u\right)\left(v(\ell+i)^\top u\right)^*\right]\,di.\nonumber
    \end{align}
    Define the stationary complex-valued random field \(\psi(\ell)\triangleq v(\ell)^\top u \in \mathbb{C}\). Denote the corresponding auto-covariance function and power spectral density as \(K_\psi(i)\) and \(S_\psi(\omega)\) respectively. It then follows that
    \begin{align}
        u^*\bar{R}(\omega)u=&\int_{\mathbb{R}^d}e^{-2\pi j i\cdot\omega}E\left[\psi(\ell)\psi(\ell+i)^*\right]\,di\nonumber\\
        =&\int_{\mathbb{R}^d}e^{-2\pi j i\cdot\omega}K_\psi(i)\,di\nonumber\\
        =& S_\psi(\omega),\nonumber
    \end{align}
    where the last equality is given by the Wiener-Khintchine-Einstein relation \cite[p. 82]{Rasmussen}. As \(S_\psi(\omega)\) is non-negative, \(\bar{R}(\omega)\) is positive semi-definite for almost all \(\omega\). As \(\bar{R}(\omega)\) is invertible by assumption \ref{ass:barR}, \(\bar{R}(\omega)\) is positive-definite for almost all \(\omega\). It follows directly that \(\bar{R}(\omega)^{-1}\) is hermitian and positive-definite for almost all \(\omega\).
    
    It can now be shown that \(S\) is symmetric and positive semi-definite as
    \begin{align}
        S=&\int_{\mathbb{R}^d}f(i)\gamma(i)\,di\nonumber\\
        =&\int_{\mathbb{R}^d}\int_{\mathbb{R}^d}e^{2\pi j i\cdot\omega}\bar{\gamma}(\omega)^\top \bar{R}(\omega)^{-1}\gamma(i)\,d\omega\,di\nonumber\\
        =&\int_{\mathbb{R}^d}\bar{\gamma}(\omega)^\top \bar{R}(\omega)^{-1}\int_{\mathbb{R}^d}e^{2\pi j i\cdot\omega}\gamma(i)\,di\,d\omega\label{eq:swapeq}\\
        =&\int_{\mathbb{R}^d}\bar{\gamma}(\omega)^\top \bar{R}(\omega)^{-1}\mathcal{F}^{-1}\{\gamma(i)\}\,d\omega\nonumber\\
        =&\int_{\mathbb{R}^d}\bar{\gamma}(\omega)^\top \bar{R}(\omega)^{-1}\bar{\gamma}(-\omega)\,d\omega\nonumber\\
        =&\int_{\mathbb{R}^d}\bar{\gamma}(-\omega)^\top \bar{R}(\omega)^{-1}\bar{\gamma}(\omega)\,d\omega.\nonumber
    \end{align}
    
    Due to the fact that \(\bar{R}(\cdot)^{-1}\) is Hermitian and \(\gamma\) is real-valued we then have
    \begin{align}
        S=&\int_{\mathbb{R}^d}\bar{\gamma}(\omega)^* \bar{R}(\omega)^{-1}\bar{\gamma}(\omega)\,d\omega.\nonumber
    \end{align}
     The interchange of integral operators in \eqref{eq:swapeq} must be justified. This is permitted by the Fubini-Tonelli theorem \cite[Cor. 13.9]{schilling_2005} as each element of \(\bar{\gamma}(\omega)^*\bar{R}(\omega)^{-1}\bar{\gamma}(\omega)\) is absolutely integrable. This can be seen by noting that every element of \(\bar{\gamma}(\omega)^*\) is bounded, and that every element of \(\bar{R}^{-1}(\omega)\bar{\gamma}(\omega)=\mathcal{F}\{f(i)^\top\}\) is absolutely integrable, ensuring that every element of \(\bar{\gamma}(\omega)^*\bar{R}(\omega)^{-1}\bar{\gamma}(\omega)\) is a finite sum of products of absolutely integrable functions and bounded functions and hence is itself absolutely integrable.
    
    This ensures \(S\) is symmetric. \(\bar{R}(\omega)^{-1}\) is a hermitian positive-definite matrix and so admits a unique Cholesky decomposition \(\bar{R}(\omega)^{-1}=D^*(\omega)D(\omega)\) with rank\((D(\omega))=m\) for almost all \(\omega\in \mathbb{R}^{d}\). We can then state that for almost all \(\omega\),
    \begin{align}
        S=&\int_\mathbb{R}\left(D(\omega)\bar{\gamma}(\omega)\right)^*D(\omega)\bar{\gamma}(\omega)\,d\omega\nonumber\\
        \implies x^\top Sx =& \|D(\cdot)\bar{\gamma}(\cdot)x\|^2_{L_2}\geq 0, \ \forall x\in \mathbb{R}^n \nonumber
    \end{align}
    hence \(S\) is positive semi-definite.
    As \(S\) is symmetric positive semi-definite, it also admits a unique decomposition \(S=GG \) for some symmetric positive semi-definite \(G=S^{\frac{1}{2}}\in\mathbb{R}^{n\times n}\) \cite[p. 440]{Horn13}, proving the first lemma assertion.

    Now, Procedure \ref{alg:kalman} asserts that
    \begin{align}
        P_{k+1|k}&=A P_{k} A^\top+Q\nonumber\\
        P_{k}&=P_{k|k-1}\left(I+S P_{k|k-1}\right)^{-1}.\label{eq:PtoP-}
    \end{align}
    Substituting the form \(S=GG=GG^\top\) into \eqref{eq:PtoP-} we have that
    \small
    \begin{align}
        P_{k+1|k}=&A P_{k|k-1}\left(I+S P_{k|k-1}\right)^{-1}A^\top+Q\nonumber\\
        =&A P_{k|k-1}\left(I+GG^\top  P_{k|k-1}\right)^{-1}A^\top+Q\nonumber\\
        =&A P_{k|k-1}\left(I-G\left(I+G^\top P_{k|k-1}G\right)^{-1}G^\top P_{k|k-1}\right)A^\top\nonumber\\
        &+Q\nonumber,
    \end{align}
    \normalsize
    which follows from the Woodbury matrix identity. A final expansion allows us to arrive at the canonical form of the discrete-time Riccati equation \eqref{eq:riccati}, completing the proof.
\end{proof}

Lemma \ref{lemma:convergence} is important because it enables us to employ the standard notions of stabilizability and detectability from finite-dimensional linear systems theory to establish the following result characterizing the convergence of our optimal linear filter. Note that Theorem \ref{theorem:convergence} requires the following assumptions
\assumstability*
\assumdetectability*

\begin{theorem}
    \label{theorem:convergence}
    Consider the filter equations given by Theorem \ref{MainResult}, and let assumptions A6 and A7 hold. If \(Q>0\), then the asymptotic \textit{a priori} covariance \(P^-_\infty\) is the unique stabilizing solution of the discrete-time algebraic Riccati equation
    \begin{align}
        P_\infty^-&=A P_\infty^-A^\top-A P_\infty^-G\left(I+G^\top P_\infty^-G\right)^{-1}G^\top P_\infty^-A^\top+Q.\nonumber
    \end{align}
\end{theorem}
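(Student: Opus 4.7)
The plan is to reduce the claim to a purely finite-dimensional statement and then invoke a classical convergence theorem for the discrete-time algebraic Riccati equation (DARE). By Lemma \ref{lemma:convergence}, the prior covariances $P_{k+1|k}$ satisfy
\[
P_{k+1|k} = A P_{k|k-1}A^\top - A P_{k|k-1}G\bigl(I+G^\top P_{k|k-1}G\bigr)^{-1}G^\top P_{k|k-1}A^\top + Q,
\]
which is precisely the standard Riccati recursion for a finite-dimensional linear system with state matrix $A$, output matrix $G^\top$, unit measurement noise covariance, and process noise covariance $Q$. Once this identification is made, the theorem follows from any of the classical DARE convergence results (see, e.g., \cite[Ch.~4]{AndersonandMoore}).

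The two hypotheses of the classical result that must be checked are detectability of the ``measurement'' pair and stabilizability of the ``process noise'' pair. Detectability of $(A, G^\top)$ follows from assumption A7 together with the symmetry of $G$ established in Lemma \ref{lemma:convergence}: A7 states that $(A,G)$ is detectable, i.e., $(A^\top,G)$ is stabilizable in the sense of Definition \ref{def:stabilizable}, and since $G = G^\top$ this coincides with stabilizability of $(A^\top, G^\top)$, which is exactly detectability of $(A, G^\top)$ by Definition \ref{def:detectable}. Stabilizability of $(A, Q^{1/2})$ is immediate from $Q>0$, since invertibility of $Q^{1/2}$ permits arbitrary eigenvalue assignment of $A + Q^{1/2} C^\top$ via $C^\top = Q^{-1/2}(F-A)$ for any Schur-stable $F$; this is in fact strictly stronger than what A6 provides.

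With these two hypotheses verified, the classical DARE theorem delivers all three claims at once: for every symmetric positive semi-definite initial $P_{1|0}$, the iterates $P_{k+1|k}$ converge to a symmetric positive semi-definite limit $P_\infty^-$; this limit satisfies the DARE displayed in the theorem statement; and among all positive semi-definite solutions of that DARE, $P_\infty^-$ is the unique one for which the closed-loop matrix $A - A P_\infty^- G(I + G^\top P_\infty^- G)^{-1} G^\top$ has spectral radius strictly less than one.

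The main subtlety, rather than a real obstacle, is the bookkeeping required to line up assumption A7, which is phrased in terms of $(A,G)$, with the detectability hypothesis of the classical DARE theorem, which naturally involves $(A, G^\top)$; this is handled entirely by the symmetry of $G$ proved in Lemma \ref{lemma:convergence}. Once that is noted, the infinite-dimensional nature of the measurement space plays no further role: the decomposition $S = GG$ has already encoded the effect of the measurement kernel and its noise into a finite-dimensional equivalent system, so no additional distributed-parameter machinery is needed to close the stability argument.
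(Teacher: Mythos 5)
Your proposal is correct and follows essentially the same route as the paper's own proof: apply Lemma \ref{lemma:convergence} to reduce the problem to a standard finite-dimensional Riccati recursion and then invoke the classical DARE convergence result of Anderson and Moore under $Q>0$, detectability, and stabilizability. Your additional bookkeeping (using the symmetry of $G$ to align assumption A7 with the detectability hypothesis, and noting that $Q>0$ already yields the required stabilizability) is sound and simply makes explicit what the paper leaves implicit.
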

\begin{proof}
    Given Lemma \ref{lemma:convergence}, \cite[p. 77]{AndersonandMoore} establishes that discrete-time algebraic Riccati equations of the form \eqref{eq:riccati} converge to a unique limit if \(Q>0\), \((A,G)\) is detectable, and \((A,Q)\) is stabilizable. In other words, if these conditions hold, then for any non-negative symmetric initial covariance \(P_0\), there exists a limit
\begin{align}
    \lim_{k\rightarrow \infty} P_{k|k-1}=P_\infty^-,\nonumber
\end{align}
and this limit satisfies the steady-state equation
\begin{align}
    P_\infty^-=&A P_\infty^-A^\top+Q\nonumber\\
        &-A P_\infty^-G\left(I+G^\top P_\infty^-G\right)^{-1}G^\top P_\infty^-A^\top,\nonumber
\end{align}
completing the proof.
\end{proof}

Given Theorem \ref{theorem:convergence}, the asymptotic \textit{a posteriori} covariance matrix $P_\infty$ can easily be calculated via the covariance update equation of Theorem \ref{MainResult}, namely, $P_\infty = P_\infty^-(I + SP_\infty^-)^{-1}$.
Surprisingly, Theorem \ref{theorem:convergence} requires only existing concepts from finite-dimensional linear system theory, and no concepts from infinite-dimensional systems theory.

\section{Algorithm and Implementation}\label{sec:Implementation}
Theorem \ref{MainResult}, combined with \eqref{eq:stateproj}, can be used to construct an algorithm as follows:
\begin{algorithm}[h]
\caption{Optimal Linear Filter}\label{alg:kalman}
\begin{algorithmic}[1]
\State Inputs: \(A, Q, R(i,i'), P_{0}, \hat{x}_{0}, \gamma(i)\)
\For{\(k\geq 1\)}
\State \(f(i)=\mathcal{F}^{-1}\{\Bar{\gamma}^\top(\omega)\Bar{R}(\omega)^{-1}\}\)
\State \(S=\int_{\mathbb{R}^d}f(i)\gamma(i)di\)
\State \(P_{k|k-1}=AP_{k-1}A^\top+Q\)
\State \(P_k=P_{k|k-1}(I+SP_{k|k-1})^{-1}\)
\State \(\hat{x}_{k|k-1}=A\hat{x}_{k-1}\)
\State Obtain measurement \(z_k(i)\)
\State \(\hat{x}_k=\hat{x}_{k|k-1}+P_k\int_{\mathbb{R}^d}f(i)(z_k(i)-\gamma(i)\hat{x}_{k|k-1})di\)
\EndFor
\State Outputs: \(\{P_1,P_2,...,P_N\}, \{\hat{x}_1,\hat{x}_2,...,\hat{x}_N\}\)
\end{algorithmic}
\end{algorithm}

We will now discuss the computational complexity of Procedure \ref{alg:kalman} and its connection to existing results, before implementing it in a simulated environment in Section \ref{sec:Simulations}.

\subsection{Computational Complexity}\label{subsec:complexity}
Our optimal linear filter has many similar properties to the finite-dimensional Kalman filter, such as the convenient ability to calculate the full trajectory of covariance matrices before run-time, as they do not rely on any measurements. This filter is similar to the filter formulated in previous work \cite{VarleyACC}, although the derivation of the filter presented in this work relies on the methodology of projections rather than the calculus of variations approach. The filter given in this article extends the previously derived filter to operate not only on the index set \(\mathbb{R}\), but on any Euclidian space. Filtering on the Euclidian plane \(\mathbb{R}^2\) is perhaps the most relevant to imaging applications, but there may be numerous applications involving higher dimensions.

In the standard finite-dimensional Kalman filtering context with \(N\) measurements, naive inversion of the matrix \((R+\Gamma P\Gamma^\top)\in \mathbb{R}^{N\times N}\) requires nearly cubic complexity in \(N\) \cite{Strassen}. This condition can sometimes be ameliorated, however, by observing that if \(R\) is the covariance matrix of a stationary process then it will have a Toeplitz form. This Toeplitz structure allows greatly reduced computational complexity for inversion operations, see \cite{FERREIRA} and the references throughout for a brief summary on the types of Toeplitz matrices and corresponding inversion complexities. For positive-definite Toeplitz matrices in particular, inversion may be performed with \(O(Nlog^2(N))\) complexity.

The optimal Kalman filter in Procedure \ref{alg:kalman} possesses similar advantages due to the stationary structure of \(R\). If \(N\) measurement samples are taken, we see in Procedure \ref{alg:kalman} that Fourier operations are used to calculate the covariance matrices and optimal gain functions, resulting in a fast Fourier transform applied to an \(m\times n\) matrix with complexity \(O\left(mnN\log(N)\right)\) \cite[p. 41]{FFTs}.

Although Procedure \ref{alg:kalman} does contain matrix inversion, this is confined to low-dimensional \(n\times n\) matrices. This approach to analysis in the continuum provides valuable insights into the optimal gain functions and the behavior of the system. This extension to the continuous measurement domain does introduce a trade-off between accuracy and speed of computation, as any implementable estimator will not be able to sample continuously over the entire space. A standard implementation may sample the space uniformly, which is the approach we take in Section \ref{sec:Simulations}, but a more sophisticated approach could involve the analysis of different non-uniform schemes that exploit the underlying system structure. Examination of this trade-off for various systems is an important area of future work.
\subsection{Comparisons to Existing Results}\label{subsec:comparisons}
During our analysis, we found the necessary and sufficient conditions on the optimal gain satisfies equation \eqref{eq:necccondition}. If the measurements were finite-dimensional, this equation reduces to a matrix equation. The gain may then be calculated via matrix inversion as
\begin{align}
    K_k=P^-_k\gamma^\top\left(R+\gamma P_k^-\gamma^\top \right)^{-1}.\nonumber
\end{align}
Note that this is the standard equation for the Kalman filter in the finite-dimensional case.

Previous works in this area have also derived condition \eqref{eq:necccondition} \cite{Meditch1971},  \cite{Tzaf1973}, \cite{Omatu}. These works, however, have either only defined the optimal gain implicitly, or defined it via an operator inverse \((\cdot)^\dagger\) which satisfies
\begin{align}
    \int_\mathcal{D}G(i,i')G^\dagger(i',i_1)di'=I\delta(i-i_1),\nonumber
\end{align}
where \(\delta(\cdot)\) is the Dirac delta function \cite[Eq. (14)]{Omatu}. The optimal gain is then expressed using this distributed-parameter matrix inverse as
\begin{align}
    K_k=P_k^-\gamma^\top (R+\gamma P_k^-\gamma^\top )^\dagger.\nonumber
\end{align}
Some derivations also give results where the gain is reliant on \(R^\dagger.\) These expressions make the connection to the finite-dimensional Kalman filter clearer, but unfortunately the operator inverse is not guaranteed to exist. Furthermore, some forms of \(R(i,i')\) will lead to inverses that are troublesome to implement. The derivation of the inverse corresponding to an Ornstein-Uhlenbeck process has been presented previously in \cite{VarleyACC} and results in an \(R^\dagger\) defined in terms of the Dirac delta function and its derivatives. The Dirac delta function is well-defined only when under an integral, and requires special care when implemented on real-world hardware. Derivatives of the Dirac delta function will also cause difficulties in implementation. The filter equations derived in this work do not rely on such an inverse, and give an explicit definition of the optimal gain. The formulation presented will, in some cases, avoid the need for generalized functions such as the aforementioned Dirac delta function and its higher derivatives.

The derived prior covariance update equation \eqref{eq:riccati} can be compared with standard forms for the finite-dimensional Kalman filter \cite[p. 39, Eq. (1.9)]{AndersonandMoore}. It is interesting to note that this implies that the prior covariance matrix for our filter is equivalent to that of a finite-dimensional Kalman filter with the measurement noise being a standard multi-variate normal distribution and an observation matrix equal to \(G=S^{\frac{1}{2}}\).

We also highlight that the state estimation error previously defined as \(e_{k} \triangleq x_k-\hat{x}_k\) satisfies
    \begin{align*}
        e_{k+1}=M_ke_k+n_k,
    \end{align*}
    where \(M_k=A-K_k\gamma \in \mathbb{R}^{n\times n}\), and \(n_k=- K_k v_k+w_k\in \mathbb{R}^n.\)
 Hence, the state estimation error dynamics and the optimal (component) gains $\kappa_k(i)$ to apply to points $i \in \mathbb{R}^d$ in the measurement domain depend on the state dynamics of the system, $A$.
 In particular, as $k\to\infty$, the steady-state gain operator $K_k$ must be chosen such that $A-K_k\gamma$ is stable and yields bounded mean square estimation errors. In other words, the gain must take account of the underlying dynamics. This aspect of the filter is important as many standard methods of processing high-dimensional measurements, such as vision-based systems, perform the detection and weighting of measurement features separately from the system dynamics.

\section{Simulation Results} \label{sec:Simulations}

This section presents simulation results for the derived Kalman filter with infinite-dimensional measurements, as described in Algorithm \ref{alg:kalman}. This system consists of a state vector \(x_k=[q_k,\dot{q}_k]^\top \), which represents the position and velocity of some agent. The motion model consists of a state transition model where \(\Delta\) is the change in time between each discrete step. A two-dimensional image is fixed perpendicular to the line of motion of the filter. This image is represented by the function \(C(p):\mathbb{R}^2\rightarrow \mathbb{R}\). We note that the measurement readings are scalar valued, perhaps representing visual light intensity, but this could easily be extended to multi-dimensional image outputs. For example, the intensity of three distinct RGB channels could be represented solely by changing the form of the function \(C(p)\) to be vector-valued. The measurement model used is the simple pinhole camera model. An illustration of this model is given in Fig. \ref{fig:pinhole}. Both the motion model and the measurement model are affected by zero-mean, stationary, additive Gaussian noise such that \(E[w_kw_k^\top ]=Q\) and \(E[v_k(i)v_k(i')^\top ]=R(i-i').\)

\begin{figure}[b]
    \centering
    \includegraphics[scale=0.855]{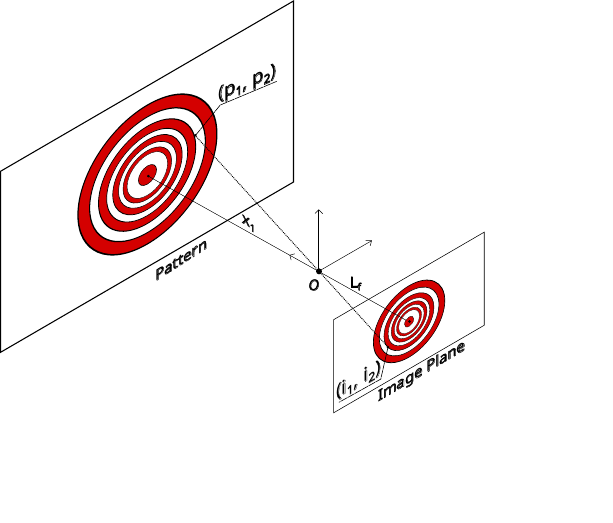}
    \caption{Diagram of the 2-D Pinhole Camera Model}
    \label{fig:pinhole}
\end{figure}

The patterned wall possesses a gray-scale intensity function
\begin{align}
    C(p)=e^{-(\eta \|p\|_2)^2}\cos(\xi \|p\|_2)+1\nonumber
\end{align}

\begin{figure*}[t]\label{fig:observations}
\centering
\begin{subfigure}{0.5\textwidth}
    \includegraphics[scale=0.162]{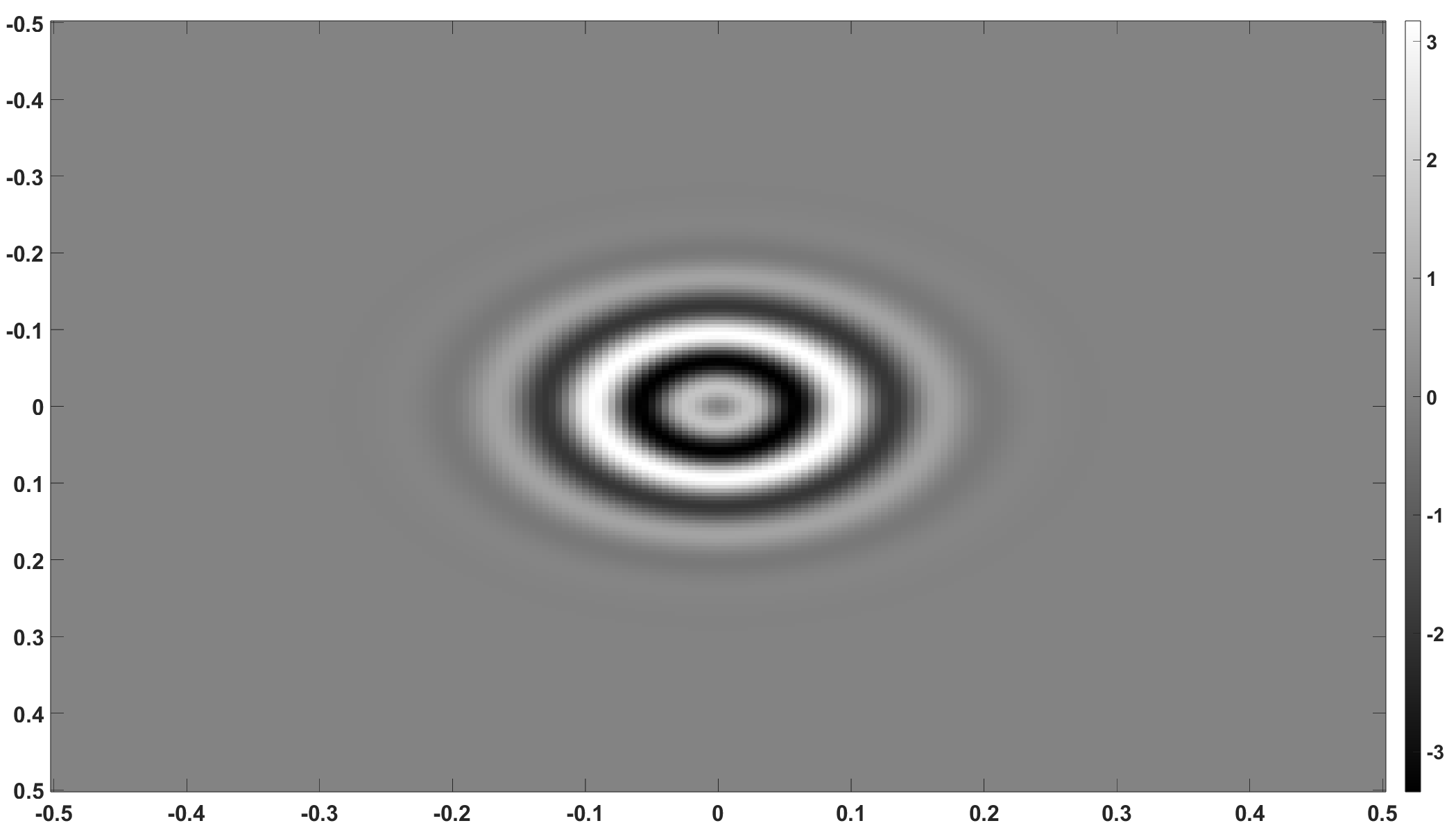}
    \caption{A measurement with no noise.}
    \label{fig:obsnonoise}
    \end{subfigure}\begin{subfigure}{0.5\textwidth}
    \includegraphics[scale=0.162]{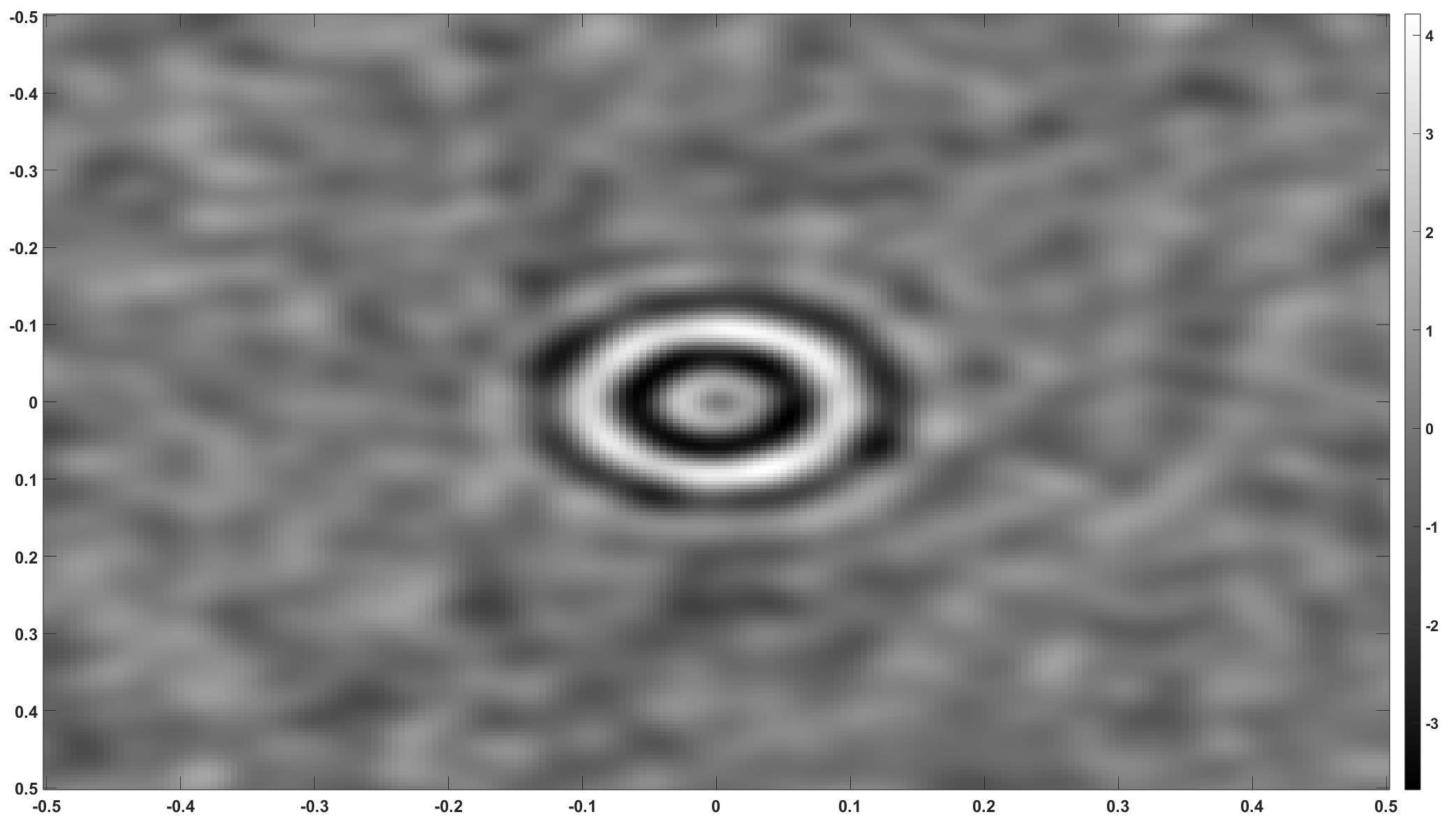}
    \caption{A measurement with substantial noise.}
    \label{fig:obsnoise}
    \end{subfigure}
    \caption{Two measurements of the decaying sinusoidal pattern, Fig. \ref{fig:obsnonoise} has no additive noise, while Fig. \ref{fig:obsnoise} is perturbed by the addition of a stochastic field with a squared exponential covariance function. Note that in \ref{fig:obsnoise} the measurements are spatially discretized for computational reasons.}
\end{figure*}

\begin{small}
\begin{table}[t!]
\begin{center}
\caption{Simulation Parameters}
\label{table:2}
\begin{tabular}{c c c} 
 \hline
 System Variable & Notation & Value\\ [0.5ex] 
 \hline
 State Matrix & \(A\) & \(\begin{bmatrix}1& \Delta\\0&1 \end{bmatrix}\) \\
 Process Noise Covariance & \(Q\) & \(\begin{bmatrix}\sigma_q^2 & 0 \\ 0 & \sigma_{\dot{q}}^2 \end{bmatrix}\)\\ 
 Measurement Kernel & \(\gamma(i)\) & \(\text{See }\eqref{eq:gamma}\)  \\
 State Vector & \(x_k\) & \([q_k,\dot{q}_k]^\top \) \\
 Initial State & \(x_0\) & \([1,0]^\top \)\\
 Linearization Point & \(\Bar{x}\) & \([1,0]^\top\)\\ 
 Integral Domain & \(\mathcal{D}\) & \([-0.5,0.5]^\top \times [-0.5,0.5]^\top\)  \\
 Wall Pattern & \(C(p)\) & \(e^{-(\eta\|p\|_2)^2}\cos(\xi\|p\|_2)+1\)\\
 Measurement Covariance & \(R(i,i')\) & \(\frac{\nu}{2\pi \ell^2}e^{-\|i-i'\|_2^2(2\ell^2)^{-1}}\)\\
 Initial Error Covariance & \(P_0\) & \(Q\)\\
 Initial State Estimate & \(\hat{x}_0\) & \(x_0\)\\
 Time Interval & \(\Delta\) & \(1\)  \\
 Position Variance & \(\sigma_q^2\) & \(0.01\)\\
 Velocity Variance & \(\sigma_{\dot{q}}^2\) & \(0.01\)\\
 Decay Parameter & \(\eta\) & \(0.1\) \\ 
 Frequency Parameter & \(\xi\) & \(0.8\) \\ \
 Measurement Noise Intensity & \(\nu\) & \(10\) \\ 
 Focal Length & \(L_f\) & \(0.01\) \\
 Length Scale & \(\ell\) & \(0.025\) \\
 Sample Spacing & \(\Delta_s\) & \(0.005\)\\ [1ex]
 \hline
\end{tabular}
\end{center}
\end{table}
\end{small}

where \(\eta\) is a scalar parameter that determines the decay rate and \(\xi\) is a scalar parameter that determines the frequency.
Utilizing the pinhole camera model, our non-linear measurement equation is then of the form
\begin{align}
    z_k(i,q_k)&=C\left(\frac{iq_k}{L_f}\right)+v_k(i).\nonumber
\end{align}
This function is linearized around some equilibrium point \(\Bar{x}=[\Bar{q},\Bar{\dot{q}}]^\top\) to derive the linear form used in the simulation. This linearization is given in equation \eqref{eq:obs_linear} and leads to the corresponding measurement function

\small
\begin{align}
    \gamma(i) = &\bigg[-e^{-(\eta L_f^{-1}\Bar{q} \|i\|_2)^2}
    \times \big(2(\eta L_f^{-1}\|i\|_2)^2\Bar{q}\cos(\xi L_f^{-1}  \|i\|_2\Bar{q})\nonumber\\&+\xi L_f^{-1}\|i\|_2\sin(\xi L_f^{-1}\|i\|_2\Bar{q})\big),0\bigg]\in\mathbb{R}^{1\times 2}.\label{eq:gamma}
\end{align}
\normalsize

The system equations then are:
\begin{align}
    x_{k+1}=&\begin{bmatrix} 1&\Delta\\0&1\end{bmatrix}\begin{bmatrix} q_k\\\dot{q}_k\end{bmatrix}+w_k\nonumber\\
    z_k(i)=&-e^{-(\eta L_f^{-1}\Bar{q} \|i\|_2)^2}\big[2(\eta L_f^{-1}\|i\|_2)^2\Bar{q} \cos(\xi  L_f^{-1}\|i\|_2\Bar{q})\nonumber\\&+\xi L_f^{-1}\|i\|_2\sin(\xi L_f^{-1}\|i\|_2\bar{q})\big]q_k+v_k(i). \label{eq:obs_linear}
\end{align}
\normalsize

An example of the linearized measurement pattern as measured by the system without and with noise are presented in Figs. \ref{fig:obsnonoise} and \ref{fig:obsnoise} respectively.

We can formulate a discrete-time Riccati equation for this system as given by \eqref{eq:riccati}.

The matrix pair \((A,Q)\) is stabilizable in the sense of Definition \ref{def:stabilizable} because considering a matrix \(M=\frac{1}{2\sigma_q^2}I\), it is then clear that 
\begin{align}
    \rho(A-MQ)&=\rho\left(A-\frac{1}{2}I\right)\nonumber\\
    &=\rho\left(\begin{bmatrix}
        \frac{1}{2} & 1 \\ 0 & \frac{1}{2}
    \end{bmatrix}\right) =\frac{1}{2}.\nonumber
\end{align}
It is also true that the matrix pair \((A, G)\) is detectable in the sense of Definition \ref{def:detectable}, where \(G=S^{\frac{1}{2}}\). This can be shown by demonstrating that the matrix pair \((A^\top, G)\) is stabilizable, specifically, in the simulated system the matrix \(G=S^{\frac{1}{2}}\) takes the form
\begin{align}
    S=\begin{bmatrix}
        G_1 & 0 \\ 0 & 0 
    \end{bmatrix}, \ G_1\in \mathbb{R}^+.\nonumber
\end{align}
Considering a matrix \(M\) of the form
\begin{align}
    M=\begin{bmatrix}
        \frac{1}{G_1} & \frac{1}{4G_1} \\ 0 & 0 
    \end{bmatrix}\nonumber
\end{align}
immediately gives that\
\begin{align}
    \rho(A^\top -SM)&=\rho\left(\begin{bmatrix}
        0 & -\frac{1}{4} \\ 1 & 1 
    \end{bmatrix}\right) =\frac{1}{2}.\nonumber
\end{align}

\begin{figure*}[t]\label{fig:pos_vel_traj}
\centering
\begin{subfigure}{0.5\textwidth}
    \includegraphics[scale=0.35]{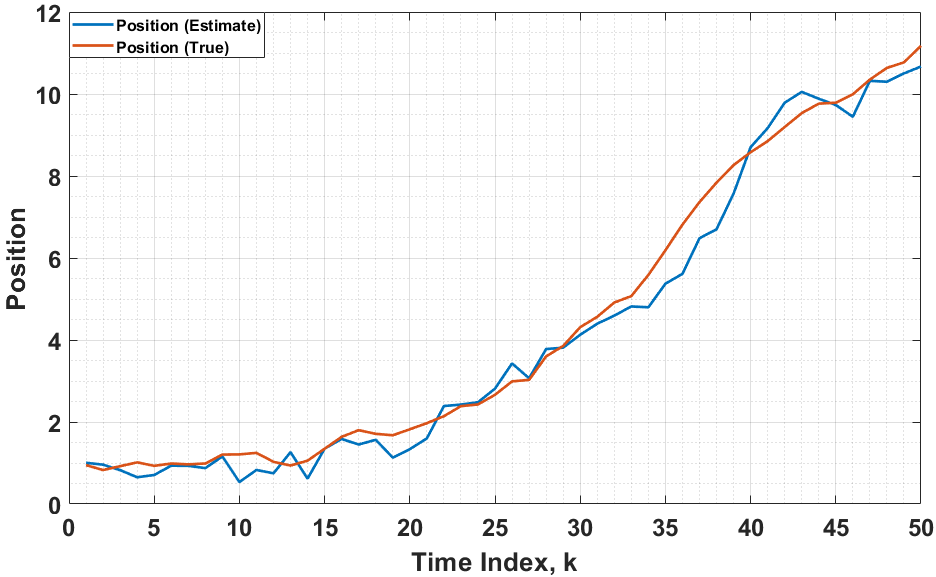}
    \caption{True Position and Estimated Position.}
    \label{fig:Position_Trajectory}
    \end{subfigure}\begin{subfigure}{0.5\textwidth}
    \includegraphics[scale=0.156]{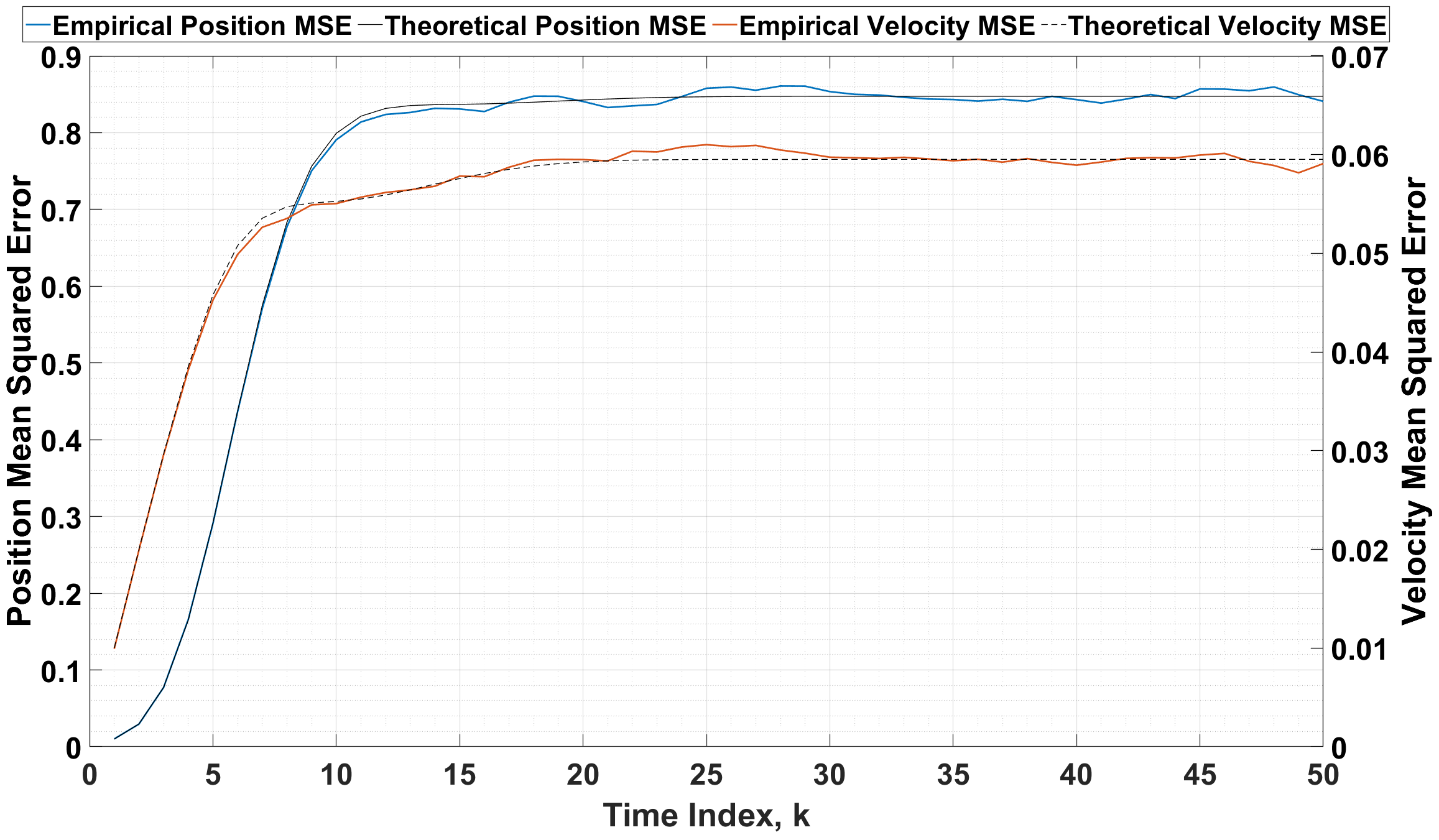}
    \caption{Empirical and Theoretical MSE.}
    \label{fig:MSE_Trajectories}
    \end{subfigure}
    \caption{Fig. \ref{fig:Position_Trajectory} displays the true position and estimated position for a single realization of the filter over 50 time units. Fig. \ref{fig:MSE_Trajectories} displays the empirical position and velocity MSE, averaged over \(20,000\) trials and compared with the theoretical position and velocity MSE.}
\end{figure*}

These conditions, along with the fact that \(Q\) is positive-definite, ensure that the discrete-time Riccati equation for this system possesses a unique stabilizing solution \cite{AndersonandMoore}.
Utilizing the Matlab function \texttt{idare} and substituting the appropriate terms, the following covariance matrices were calculated for this system.

    \begin{align}\label{covariances}
        P_\infty^-=\begin{bmatrix}
        1.2018 & 0.2019\\ 0.2019 & 0.0695
    \end{bmatrix}, \ P_\infty=\begin{bmatrix}
        0.8475 & 0.1424\\ 0.1424 & 0.0595
    \end{bmatrix}.
    \end{align}

The process noise of the position and velocity are uncorrelated. A centred, stationary, stochastic field is used to represent the measurement noise, the covariance function of which is a squared exponential also given in table \ref{table:2}.

With the parameters given in Table \ref{table:2}, Algorithm \ref{alg:kalman} was implemented and simulated using Matlab. It is necessary during implementation to discretise the system appropriately for numerical computation. Numerical functions were evaluated with spacings equal to \(\Delta_s\) in each dimension, or a frequency of \(200\) samples per unit of \(i\) in each dimension. Algorithm \ref{alg:kalman} requires numerical integration for the posterior state update step, and the \texttt{trapz} function was used to implement this numerical integration. For this numerical integration step, a finite pixel domain $\mathcal{D} =[-0.5,0.5]\times [-0.5,0.5]$ was used. This is larger than the effective pixel-width of \(\gamma\) and much larger than the effective width $\ell$ of the noise covariance $R$, suggesting minimal impact on the results obtained. 

This naive method of numerical integration does not exploit the structure of \(f(i)\) and hence there may be more efficient numerical methods that do exploit this structure and lead to more efficient computation. Analysis of such techniques and their impact on efficiency will continue to be an important question for future work in this area.

A single realization of the systems true state and the filter estimate over \(50\) time steps is given in Fig. \ref{fig:Position_Trajectory}. The system was subject to \(20,000\) simulated trials and the average MSE of both the position and velocity trajectories calculated. Fig. \ref{fig:MSE_Trajectories} presents these results alongside the theoretical mean square errors according to \eqref{covariances}.

\newpage
\section{Conclusion}
This article has presented a Kalman filter for a linear system with finite-dimensional states and infinite-dimensional measurements, both corrupted by additive random noise. The assumption that the measurement noise covariance function is stationary allows a closed-form expression of the optimal gain function, whereas previous derivations rely on an implicitly defined inverse. Conditions are derived which ensure the filter is asymptotically stable and, surprisingly, these conditions apply to finite-dimensional components of the system. An algorithm is presented which takes advantage of this stationarity property to reduce computational complexity, and the derived filter is tested in a simulated environment motivated by a linearization of the pinhole camera model. 

An extension to a continuous-time framework is possible, but would require additional tools and some further technical conditions, such as the use of stochastic integrals (either in the It\^o or Stratonovich sense). An extension of the Fubini-Tonelli theorem given by Theorem \ref{Fub-Ton-Custom} would also need to be modified to validate the interchange of expectation and integration on stochastic fields over a continuous spatial and temporal field.

Future work includes investigating more efficient means of executing this algorithm, such as non-uniform integration intervals, to yield insight into effective sensor placement. This could be thought of as a form of feature selection, and this principled approach to feature selection may provide insights related to existing heuristics-based and data driven feature selection strategies often employed in high-dimensional measurement processing. It should also be noted that most real-world applications possess non-linear measurement equations and adapting the linear filter to nonlinear systems will be an important area for future work.

\appendices
\section{Existence of Integral \eqref{eq:kappaform} via Fubini-Tonelli Theorem}\label{app:Fub-Ton}
We introduce one final technical assumption:

\begin{assumption}\label{ass:measurable} 
    \(\kappa_k(\cdot)s_k(\cdot,\cdot)\) is measurable with respect to the product \(\sigma\)-algebra on \(\mathbb{R}^d\times \Theta\).
    \end{assumption}    
\begin{proof}
    For any fixed \(i\), Jensen's inequality tells us 
    \begin{align}
    E\left[\|\kappa(i)s(i)\|_2\right]^2\leq &E\left[\|\kappa(i)s(i)\|_2^2\right]\nonumber\\
        = &E\bigg[Tr\big[\kappa(i)s(i)s^\top(i)\kappa^\top(i)\big]\bigg]\nonumber\\
        = &Tr\bigg[\kappa(i)E\left[s(i)s^\top(i)\right]\kappa^\top(i)\bigg]\nonumber\\
        = &Tr\bigg[\kappa(i)\left(\Sigma(0)+\gamma(i)P\gamma(i)^\top\right)\kappa^\top(i)\bigg] \nonumber\\
        =&\|\Sigma^{\frac{1}{2}}(0)\kappa(i)\|^2_F+\|P^{\frac{1}{2}}\gamma(i)\kappa(i)]\|^2_F\nonumber\\
        \leq& \|\Sigma^{\frac{1}{2}}(0)\|_F^2\|\kappa(i)\|_F^2+\|P^{\frac{1}{2}}\|_F^2\|\gamma(i)\kappa(i)\|_F^2.\nonumber
    \end{align}
    This implies, due to the non-negativity of norms, that 
    \begin{align}
        E\left[\|\kappa(i)s(i)\|_2\right] \leq& \|\Sigma^{\frac{1}{2}}(0)\|_F\|\kappa(i)\|_F+\|P^{\frac{1}{2}}\|_F\|\gamma(i)\kappa(i)\|_F.\nonumber
    \end{align}
    As this is true for any given \(i\), we have the inequality
    \begin{align}
     \int_{\mathbb{R}^d}E\left[\|\kappa(i)s(i)\|_2\right]di \leq&\|\Sigma^{\frac{1}{2}}(0)\|_F\int_{\mathbb{R}^d}\|\kappa(i)\|_F\,di\nonumber\\
     &+\|P^{\frac{1}{2}}\|_F\int_{\mathbb{R}^d}\|\gamma(i)\kappa(i)\|_F \,di.\nonumber
    \end{align}
    As \(\Sigma\) is bounded and \(\kappa(i)\) is absolutely integrable, the first term of the right hand side is finite. By H\"older's inequality and assumptions \ref{ass:gamma} and \ref{ass:kappa} we have \(\|\kappa \gamma\|_{L_1}\leq \|\kappa\|_{L_1}\|\gamma\|_{L_\infty}<\infty\) and hence the second term as well as the the left hand side is finite. As all norms on \(\mathbb{R}^n\) are equivalent, we have
    \begin{align}
    \int_{\mathbb{R}^d}E\left[\|\kappa(i)s(i)\|_1\right]di& < \infty \nonumber\\
     \iff\int_{\mathbb{R}^d}\int_\Theta\|\kappa(i)s(i,\theta)\|_1dP(d\theta)di&<\infty\nonumber\\
    \iff \int_{\mathbb{R}^d}\int_\Theta\left|[\kappa(i)s(i,\theta)]^{\ell}\right|dP(d\theta)di&<\infty,\nonumber
    \end{align}
    where \([\kappa(i)s(i,\theta)]^{\ell}\) is the \(\ell^{th}\) element of the vector \(\kappa(i)s(i,\theta)\).
    As the integral of the absolute expectation of each component is finite, we may apply the Fubini-Tonelli theorem component-wise for each \(\ell\) and compose the vector form
    \begin{align}    \int_{\mathbb{R}^d}E\left[\kappa(i)s(i)\right]di=E\left[\int_{\mathbb{R}^d}\kappa(i)s(i)di\right].\nonumber
    \end{align}
\end{proof}

\bibliographystyle{ieeetr}
\bibliography{references}

\begin{thebibliography}{10}

\bibitem{Belloni2011}
A.~Belloni and V.~Chernozhukov, {\em High Dimensional Sparse Econometric Models: An Introduction}, pp.~121--156.
\newblock Berlin, Heidelberg: Springer Berlin Heidelberg, 2011.

\bibitem{Sepulchre}
R.~Sepulchre, ``Spiking control systems,'' {\em Proceedings of the IEEE}, vol.~110, no.~5, pp.~577--589, 2022.

\bibitem{Corke}
P.~Corke, {\em Robotics, Vision and Control - Fundamental Algorithms in MATLAB®}, vol.~73 of {\em Springer Tracts in Advanced Robotics}.
\newblock Springer, 2011.

\bibitem{Morris}
K.~Morris, {\em Controller Design for Distributed Parameter Systems}.
\newblock Springer, 2020.

\bibitem{Curtain}
R.~Curtain, ``A survey of infinite-dimensional filtering,'' {\em SIAM Review}, vol.~17, no.~3, pp.~395--411, 1975.

\bibitem{Bensoussan}
A.~Bensoussan, G.~{Da Prato}, M.~Delfour, and S.~Mitter, {\em Representation and Control of Inﬁnite Dimensional Systems}.
\newblock Systems and Control: Foundations and Applications, Birkhauser, 2~ed., 2007.

\bibitem{Tzafestas1968}
S.~Tzafestas and J.~Nightingale, ``Optimal filtering, smoothing and prediction in linear distributed-parameter systems,'' {\em Proceedings of the Institution of Electrical Engineers}, vol.~115, pp.~1207--1212(5), 1968.

\bibitem{Omatu}
H.~Nagamine, S.~Omatu, and T.~Soeda, ``The optimal filtering problem for a discrete-time distributed parameter system,'' {\em International Journal of Systems Science}, vol.~10, no.~7, pp.~735--749, 1979.

\bibitem{Falbinf}
P.~L. Falb, ``Infinite-dimensional filtering: The kalman-bucy filter in hilbert space,'' {\em Information and Control}, vol.~11, pp.~102--137, 1967.

\bibitem{Emirsajłow2021}
Z.~Emirsaj{\l}ow, {\em Output Observers for Linear Infinite-Dimensional Control Systems}, pp.~67--92.
\newblock Cham: Springer International Publishing, 2021.

\bibitem{ZwartCurtain}
R.~Curtain and H.~Zwart, {\em Introduction to Infinite-Dimensional Systems Theory: A State-Space Approach}, vol.~71 of {\em Texts in Applied Mathematics book series}.
\newblock Germany: Springer, 2020.

\bibitem{2011Survey}
Z.~Hidayat, R.~Babuska, B.~De~Schutter, and A.~Núñez, ``Observers for linear distributed-parameter systems: A survey,'' in {\em IEEE International Symposium on Robotic and Sensors Environments}, pp.~166--171, 2011.

\bibitem{inverseproblem}
A.~Aalto, ``{Spatial discretization error in Kalman filtering for discrete-time infinite dimensional systems},'' {\em IMA Journal of Mathematical Control and Information}, vol.~35, pp.~i51--i72, 04 2017.

\bibitem{KalmanWithNumericalGPs}
A.~Küper and S.~Waldherr, ``Numerical gaussian process kalman filtering for spatiotemporal systems,'' {\em IEEE Transactions on Automatic Control}, pp.~1--8, 2022.

\bibitem{Thau1969}
F.~E. Thau, ``{On Optimum Filtering for a Class of Linear Distributed-Parameter Systems},'' {\em Journal of Basic Engineering}, vol.~91, no.~2, pp.~173--178, 1969.

\bibitem{Tzaf1973}
S.~G. Tzafestas, ``On the distributed parameter least-squares state estimation theory,'' {\em International Journal of Systems Science}, vol.~4, no.~6, pp.~833--858, 1973.

\bibitem{Tzafestas72}
S.~G. Tzafestas, ``Bayesian approach to distributed-parameter filtering and smoothing,'' {\em International Journal of Control}, vol.~15, no.~2, pp.~273--295, 1972.

\bibitem{Morris2020}
K.~A. Morris, ``Optimal output estimation for infinite-dimensional systems with disturbances,'' {\em Systems \& Control Letters}, vol.~146, p.~104803, 2020.

\bibitem{Meditch1971}
J.~S. Meditch, ``Least-squares filtering and smoothing for linear distributed parameter systems,'' {\em Automatica}, vol.~7, p.~315–322, May 1971.

\bibitem{BensoussanConference}
A.~Bensoussan, {\em Some Remarks on Linear Filtering Theory for Infinite Dimensional Systems}, vol.~286, pp.~27--39.
\newblock Springer Berlin Heidelberg, 10 2003.

\bibitem{VarleyACC}
M.~M. Varley, T.~L. Molloy, and G.~N. Nair, ``Kalman filtering for discrete-time linear systems with infinite-dimensional observations,'' in {\em 2022 American Control Conference (ACC)}, pp.~296--303, 2022.

\bibitem{Fuhrmann1973}
P.~A. Fuhrmann, ``On observability and stability in infinite-dimensional linear systems,'' {\em Journal of Optimization Theory and Applications}, vol.~12, pp.~173--181, Aug 1973.

\bibitem{Megan}
M.~Megan, ``Stability and observability for linear discrete-time systems in hilbert spaces,'' {\em Bulletin mathématique de la Société des Sciences Mathématiques de la République Socialiste de Roumanie}, vol.~24 (72), no.~3, pp.~277--284, 1980.

\bibitem{Fristedt}
B.~Fristedt and L.~Gray, {\em A Modern Approach to Probability Theory}.
\newblock Boston, MA: Birkhauser, 1997.

\bibitem{vectoropt}
D.~G. Luenberger, {\em Optimization by Vector Space Methods}.
\newblock USA: John Wiley \& Sons, Inc., 1st~ed., 1997.

\bibitem{adlerrandom}
R.~J. Adler, {\em The Geometry of Random Fields (Probability \& Mathematical Statistics S.)}.
\newblock {John Wiley \& Sons Inc}, June 1981.

\bibitem{doobstochastic}
J.~Doob, {\em Stochastic Processes}.
\newblock Probability and Statistics Series, Wiley, 1953.

\bibitem{janson}
S.~Janson, {\em Gaussian Hilbert Spaces}.
\newblock Cambridge Tracts in Mathematics, Cambridge University Press, 1997.

\bibitem{Cox}
R.~Boie and I.~Cox, ``An analysis of camera noise,'' {\em IEEE Transactions on Pattern Analysis and Machine Intelligence}, vol.~14, no.~6, pp.~671--674, 1992.

\bibitem{Fehrenbach}
J.~Fehrenbach, P.~Weiss, and C.~Lorenzo, ``Variational algorithms to remove stationary noise: Applications to microscopy imaging,'' {\em IEEE Transactions on Image Processing}, vol.~21, no.~10, pp.~4420--4430, 2012.

\bibitem{stationaryRadio}
T.~L. Marzetta, ``Spatially-stationary propagating random field model for massive mimo small-scale fading,'' in {\em 2018 IEEE International Symposium on Information Theory (ISIT)}, pp.~391--395, 2018.

\bibitem{Geostat}
P.~M. Atkinson and C.~D. Lloyd, {\em Geostatistical Models and Spatial Interpolation}, pp.~1461--1476.
\newblock Berlin, Heidelberg: Springer Berlin Heidelberg, 2014.

\bibitem{stein_fourier}
E.~M. Stein and G.~Weiss, {\em Introduction to Fourier Analysis on Euclidean Spaces}.
\newblock Princeton University Press, 1971.

\bibitem{follandfourier}
G.~Folland, {\em Fourier Analysis and Its Applications}.
\newblock American Mathematical Society, 2009.

\bibitem{AndersonandMoore}
B.~D.~O. Anderson and J.~B. Moore, {\em Optimal filtering}.
\newblock Prentice-Hall Englewood Cliffs, N.J, 1979.

\bibitem{spacenav}
A.~I. Mourikis, N.~Trawny, S.~I. Roumeliotis, A.~E. Johnson, A.~Ansar, and L.~Matthies, ``Vision-aided inertial navigation for spacecraft entry, descent, and landing,'' {\em IEEE Transactions on Robotics}, vol.~25, no.~2, pp.~264--280, 2009.

\bibitem{LHDTopo}
M.~Mascaró, I.~Parra-Tsunekawa, C.~Tampier, and J.~Ruiz-del Solar, ``Topological navigation and localization in tunnels—application to autonomous load-haul-dump vehicles operating in underground mines,'' {\em Applied Sciences}, vol.~11, no.~14, 2021.

\bibitem{LHDOverview}
H.~Mäkelä, ``Overview of lhd navigation without artificial beacons,'' {\em Robotics and Autonomous Systems}, vol.~36, no.~1, pp.~21--35, 2001.

\bibitem{lowlightSLAM}
A.~Krasner, M.~Sizintsev, A.~Rajvanshi, H.-P. Chiu, N.~Mithun, K.~Kaighn, P.~Miller, R.~Villamil, and S.~Samarasekera, ``Signav: Semantically-informed gps-denied navigation and mapping in visually-degraded environments,'' in {\em 2022 IEEE/CVF Winter Conference on Applications of Computer Vision (WACV)}, pp.~1858--1867, 2022.

\bibitem{RadLidVis}
H.~Cho, Y.-W. Seo, B.~V. Kumar, and R.~R. Rajkumar, ``A multi-sensor fusion system for moving object detection and tracking in urban driving environments,'' in {\em 2014 IEEE International Conference on Robotics and Automation (ICRA)}, pp.~1836--1843, 2014.

\bibitem{Kreyszig}
E.~Kreyszig, {\em Introductory Functional Analysis with Applications}.
\newblock Wiley Classics Library, Wiley, 1991.

\bibitem{Horn13}
R.~A. Horn and C.~R. Johnson, {\em Matrix Analysis}.
\newblock Cambridge; New York: Cambridge University Press, 2nd~ed., 2013.

\bibitem{schilling_2005}
R.~L. Schilling, {\em Measures, Integrals and Martingales}.
\newblock Cambridge University Press, 2005.

\bibitem{axler}
S.~J. Axler, {\em Linear Algebra Done Right}.
\newblock Undergraduate Texts in Mathematics, New York: Springer, 1997.

\bibitem{KalmanOriginal}
R.~E. Kalman, ``A new approach to linear filtering and prediction problems,'' {\em Transactions of the ASME--Journal of Basic Engineering}, vol.~82, no.~Series D, pp.~35--45, 1960.

\bibitem{Bracewell2003}
R.~Bracewell, {\em The Two-Dimensional Fourier Transform}, pp.~140--173.
\newblock Boston, MA: Springer US, 2003.

\bibitem{conway}
J.~B. Conway, {\em A course in functional analysis / John B. Conway.}
\newblock Graduate texts in mathematics ; 96, New York: Springer Science+Business Media, 2nd ed.~ed., 2010.

\bibitem{Rasmussen}
C.~E. Rasmussen and C.~K.~I. Williams, {\em Gaussian processes for machine learning.}
\newblock Adaptive computation and machine learning, MIT Press, 2006.

\bibitem{Strassen}
V.~Strassen, ``Gaussian elimination is not optimal,'' {\em Numer. Math.}, vol.~13, p.~354–356, aug 1969.

\bibitem{FERREIRA}
P.~J. Ferreira and M.~E. Domínguez, ``Trading-off matrix size and matrix structure: Handling toeplitz equations by embedding on a larger circulant set,'' {\em Digital Signal Processing}, vol.~20, no.~6, pp.~1711--1722, 2010.

\bibitem{FFTs}
K.~R. Rao, D.~N. Kim, and J.-J. Hwang, {\em Fast Fourier Transform - Algorithms and Applications}.
\newblock Springer Publishing Company, Incorporated, 1st~ed., 2010.

\end{thebibliography}

\begin{IEEEbiography}[{\includegraphics[width=1in,height=1.25in,clip,keepaspectratio]{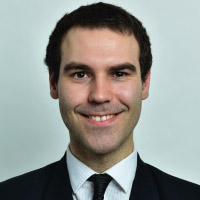}}]{Maxwell M. Varley}
Received the B.S. degree in electrical systems and the M.S. degree in Electrical Engineering from the University of Melbourne, Australia, in 2017 and 2019, respectively. He is currently working towards the Ph.D. degree in electrical engineering at the University of Melbourne.
\end{IEEEbiography}

\begin{IEEEbiography}[{\includegraphics[width=1in,height=1.25in,clip,keepaspectratio]{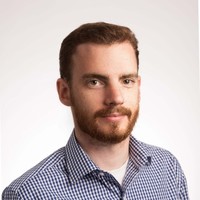}}]{Timothy L. Molloy} (Member, IEEE) was born in Emerald, Australia. He received the B.E. and Ph.D. degrees from Queensland University of Technology (QUT), Brisbane, QLD, Australia, in 2010 and 2015, respectively. He is currently a Senior Lecturer in Mechatronics at the Australian National University. From 2017 to 2019, he was an Advance Queensland Research Fellow at QUT, and from 2020 to 2022, he was a Research Fellow at the University of Melbourne. His interests include signal processing and information theory for robotics and control.
Dr. Molloy is the recipient of a QUT University Medal, a QUT Outstanding Doctoral Thesis Award, a 2018 Boeing Wirraway Team Award, and an Advance Queensland Early-Career Fellowship.
\end{IEEEbiography}

\begin{IEEEbiography}[{\includegraphics[width=1in,height=1.25in,clip,keepaspectratio]{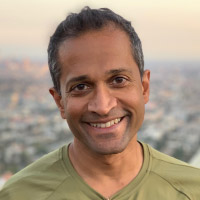}}]{Girish N. Nair} (Fellow, IEEE) was born in Malaysia. He received the B.E. (First Class Hons.) degree in electrical engineering, the B.Sc. degree in mathematics, and the Ph.D. degree in electrical engineering from the University of Melbourne, Parkville, VIC, Australia, in 1994, 1995, and 2000, respectively. He is currently a Professor with the Department of Electrical and Electronic Engineering, University of Melbourne. From 2015 to 2019, he was an ARC Future Fellow, and from 2019 to 2024, he is the Principal Australian Investigator of an AUSMURI project. He delivered a semiplenary lecture at the 60th IEEE Conference on Decision and Control, in 2021. His research interests are in information theory and control. 
Prof. Nair was a recipient of several prizes, including the IEEE CSS Axelby Outstanding Paper Award in 2014 and a SIAM Outstanding Paper Prize in 2006.
\end{IEEEbiography}

\end{document}